\title{Additivity of entropic uncertainty relations}
\author{René Schwonnek}
\affiliation{Institut für Theoretische Physik, Leibniz Universität Hannover, Germany}
\date{\today}
\newtheorem{lem}{Lemma}
\newtheorem{cor}{Corollary}
\newtheorem{thm}{Theorem}
\newtheorem{remark}{Remark}
\newtheorem{prop}{Proposition}
\newcommand{\tr}{\operatorname{tr}}
\newcommand{\px}{p^{X}}
\newcommand{\py}{p^{Y}}
\newcommand{\x}{\mathbf{x}}
\newcommand{\y}{\mathbf{y}}
\newcommand{\p}{\mathbf{p}}
\newcommand{\q}{\mathbf{q}}
\renewcommand{\v}{\mathbf{v}}
\newcommand{\w}{\mathbf{w}}
\begin{document}
\maketitle%
\begin{bf}
We consider the uncertainty between two pairs of local projective measurements performed on a multipartite system. We show that the optimal bound in any linear uncertainty relation, formulated in terms of the Shannon entropy, is additive. This directly implies, against naive intuition, that the minimal entropic uncertainty can always be realized by fully separable states. Hence, in contradiction to proposals by other authors, no entanglement witness can be constructed solely by comparing the attainable uncertainties of entangled and separable states. 
However, our result gives rise to a huge simplification for computing global uncertainty bounds as they now can be deduced from local ones. 

Furthermore, we provide the natural generalization of the Maassen and Uffink inequality for linear uncertainty relations with arbitrary positive coefficients.

\end{bf}
\vspace{0.8cm}
\section*{Introduction}
\pdfoutput=1

Uncertainty and entanglement are doubtless two of the most prominent and drastic properties that set apart quantum physics from a classical view on the world. Their interplay contains a rich structure, which is neither sufficiently understood nor fully discovered. 
In this work, we reveal a new aspect of this structure: the additivity of entropic uncertainty relations. 

For product measurements in a multipartition, we show that the optimal bound $c_{ABC\ldots}$ in a linear uncertainty relation satisfies
\begin{align}
c_{ABC\cdots}=c_A+c_B+c_C+\ldots\quad,
\end{align}
where $c_A,c_B,c_C,\ldots$ are bounds that only depend on local measurements.  
This result implies that minimal uncertainty for product measurements can always be realized by uncorrelated states. Hence, we have an example for a task which is not improved by the use of entanglement.

We will quantify the uncertainty of a measurement by the Shannon entropy of its outcome distribution. For this case, the corresponding linear uncertainty bound $c_{ABC\ldots}$ gives the central estimate in many applications like: entropic  steering witnesses \cite{schneeloch,ana,alberto2,chinaSchur}, uncertainty relations with side-information \cite{berta}, some security proofs \cite{furrer} and many more. 

When speaking about uncertainty, we consider so called {\it preparation uncertainty relations} \cite{heisenberg,kennard,robertson,variances,amu,colespiani,review1,povm1}. From an operational point of view, a preparation uncertainty describes fundamental limitations, i.e. a tradeoff, on the certainty of predicting outcomes of several measurements that are performed on instances of the same state.  
This should not be confused \cite{myfriends} with its operational counterpart named measurement uncertainty\cite{busch,mur,renes,abbot,entromur}. A measurement uncertainty relation describes the ability of producing a measurement device which approximates several incompatible measurement devices in one shot. 



The calculations in this work focus on uncertainty relations in a bipartite setting. However, all results can easily be generalized to a multipartite setting by an iteration of statements on bipartitions. The basic measurement setting, which we consider for  bipartitions, is depicted in Fig.~\ref{fig:bipartsetting}.
\begin{figure}[h]
\def\svgwidth{0.35\textwidth}
\input{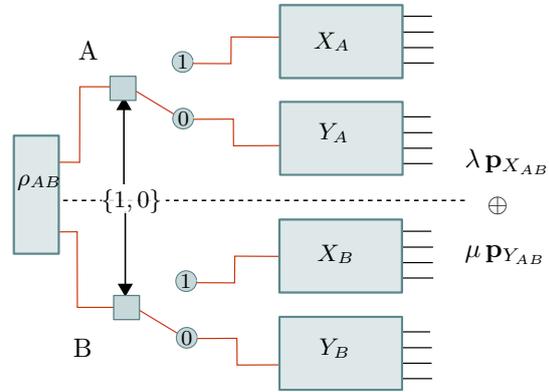}
\caption{\label{fig:bipartsetting}\small Basic setting of product measurements on a bipartition: pairs of measurements $X_A,X_B$ or $Y_A,Y_B$ are applied to a joint state $\rho_{AB}$ at the respective sides of a bipartition. One bit of information is transmitted for communicating whether the $X$ or the $Y$ measurements are performed. The weights $(\lambda,\mu)$ denote the probabilities corresponding to this choice.}
\end{figure}
We consider a pair of measurements, $X_{AB}=X_A X_B$ and $Y_{AB}=Y_A Y_B$, to which we will refer as the global measurements of (tensor) product form. Each of those global measurements of product form is implemented by applying local measurements at the respective sides of a bipartition between parties denoted by $A$ and $B$. Hereby, the variables $X_A,X_B$ and $Y_A,Y_B$ will refer to those local measurements applied to the respective sides. 

We only consider projective measurements, but beside this we impose no further restrictions on the individual measurements. So the only property that measurements like $X_A$ and $X_B$ have to share is the common label '$X$', besides this, they could be non-commuting or even defined on Hilbert spaces with different dimensions.

The main result of this work is stated in Prop.\ref{prop1} in Sec. \ref{mainres}. In that section, we also collect some remarks on possible and impossible generalizations and the construction of entanglement witnesses. The proof of Prop.\ref{prop1} is placed at the end of this paper, as it relies on two basic theorems stated in Sec.\ref{pqnorms} and Sec.\ref{additivity}.

 Thm.\ref{thm1}, in Sec.\ref{pqnorms}, clarifies and expands the known connection between the logarithm of $(p,q)$-norms and entropic uncertainty relations. As a special case of this theorem we obtain Lem.\ref{lem1} which states the natural generalization of the well known Maassen and Uffink bound \cite{muff} to weighted uncertainty relations. Thm.\ref{thm2}, in Sec.\ref{additivity}, states that $(p,q)$-norms, in a certain parameter range, are multiplicative, which at the end leads to the desired statement on the additivity of uncertainty relations. 

Before stating the main result, we collect, in Sec.\ref{ucrbipart}, some general observations on the behavior of uncertainty relations for product measurements with respect to different classes of correlated states. Furthermore, in Sec.~\ref{eucr}, we will motivate and explain the explicit form of linear uncertainty relations used in this work.


\section{Uncertainty in bipartitions}
\label{ucrbipart}
\pdfoutput=1

All uncertainty relations considered is this paper are state-independent. 
In practice, finding a state-independent relation leads to the problem of jointly minimizing a tuple of given uncertainty measures, here the Shannon entropy of $X_{AB}$ and $Y_{AB}$, over all states. This minimum, or a lower bound on it, then gives the aforementioned trade-off, which then allows to formulate statements like: {\it "whenever the uncertainty of $X_{AB}$ is small, the uncertainty of $Y_{AB}$ has to be bigger than some state-independent constant" }.

Considering the measured state, $\rho_{AB}$, it is natural to distinguish between the three classes: uncorrelated, classically correlated and non-classical correlated. In regard of the uncertainty in a corresponding global measurement, states in these classes share some common features:

If the measured state is \textbf{uncorrelated}, i.e a product state $\rho_{AB}=\rho_A\otimes\rho_B$, the outcomes of the local measurements are uncorrelated as well.
Hence, the uncertainty of a global measurement is completely determined by the uncertainty of the local measurements on the respective local states $\rho_A$ and $\rho_B$. Moreover, in our case, the additivity of the Shannon entropy, tells us that the uncertainty of a global measurement is simply the sum of the uncertainty of the local ones. In the same way any trade-off on the global uncertainties can be deduced from local ones. 

If the measured state is \textbf{classically correlated}, i.e a convex combination of product states \cite{rfw89}, additivity of local uncertainties does not longer hold. More generally, whenever we consider a concave uncertainty measure \cite{gour}, like the Shannon entropy, the global uncertainty of a single global measurement is smaller than the sum of the local uncertainties. Intuitively this makes sense because a correlation allows to deduce information on the potential measurement outcomes of one side given a particular measurement outcome on the other.
However, a linear uncertainty relation for a pair of global measurements is not affected by this, i.e a trade-off will again be saturated by product states.
This is because the uncertainty relation between two measurements, restricted to some convex set of states, will always be attained on an extreme point of this set.  

However, if measurements are applied to an \textbf{entangled state}, more precisely to a state which shows EPR-steering \cite{epr,schrodinger,wiseman} with respect to the measurements $X_{AB}$ and $Y_{AB}$, it is in general not clear how a trade-off between global uncertainties relates to the corresponding trade-off between local ones. 
Just have in mind that steering implies the absence of any local state model, which is usually proven by showing that any such model would violate a local uncertainty relation. 

In principle one would expect to obtain smaller uncertainty bounds by also considering entangled states, and there are many entanglement witnesses known based on this idea (see also Rem.~\ref{rem3} in the following section).


\section{Linear uncertainty relations}
\label{eucr}
\pdfoutput=1

We note that there are many uncertainty measures, most prominently variances \cite{variances,kennard}. Variance, and similar constructed measures \cite{meandist,mur}, describe the deviation from a mean value, which clearly demands to assign a metric structure to the set of measurement outcomes. From a physicist's perspective this makes sense in many situations \cite{amu} but can also cause strange behaviours in situations where this metric structure has to be imposed artificially \cite{deutsch}. 
However, from the perspective of information theory, this seems to be an unnecessary dependency. Especially when uncertainties with respect to multipartitions are considered, it is not clear at all how such a metric should be constructed. Hence, it can be dropped and a quantity that only depends on probability distributions of measurement outcomes has to be used.  We will use the Shannon entropy. It fulfills the above requirement, does not change when the labeling of the measurement outcomes are permuted, and has a clear operational interpretation \cite{shannon,kolmogorov}. Remarkably, Claude Shannon himself used the term 'uncertainty' as an intuitive paraphrase for the quantity today known as 'entropy' \cite{shannon}. Historically, the decision to call the Shannon entropy an 'entropy' goes back to a suggestion John von Neumann gave to Shannon, when he was visiting Weyl in 1940 (there are, at least, three versions of this anecdote known \cite{neumannshannon2}, the most popular is \cite{neumannshannon}).   

Because we are not interested in assigning values to measurement outcomes, a measurement, say $X$, is sufficiently described by its POVM elements, $\{X_i\}$. So, given a state $\rho$, the probability of obtaining the $i$-th outcome is computed by $\tr(\rho X_i)$. The respective probability distribution of all outcomes is denoted by the vector $\p^x_\rho$. Within this notation the Shannon entropy of a $X$ measurement is given by $H(X|\rho):=-\sum_i \left(\p^x_\rho\right)_i \log \left(\p^x_\rho\right)_i$. As we restrict ourselves to non-degenerate projective measurements, all necessary information on a pair of measurements, $X$ and $Y$, is captured by a unitary $U$ that links the measurement basis. We will use the convention to write $U$ as transformation from the $\{X_i\}$ to the $\{Y_i\}$-basis, i.e. we will take $U$ such that $Y_i=U X_i U^\dagger$ holds.

Our basic objects of interest are optimal, state-independent and linear relations. This is, for fixed weights $\lambda,\mu\in\mathbb{R}^+$ we are interested in the best constant $c(\lambda,\mu)$ for which the linear inequality
\begin{align}
\lambda H(X|\rho)+ \mu H(Y|\rho) \geq c(\lambda,\mu) \label{wheightedbound} 
\end{align}
holds on \textit{all} states $\rho$.

Such a relation has two common interpretations: On one hand 
one can consider a guessing game, see also \cite{jed}. On the other, a relation like \eqref{wheightedbound} can be interpreted geometrically as in Fig.~\ref{fig:settingregion}.\vspace{0.2cm}

\noindent \textbf{Linear uncertainty: a guessing game}\\
For the moment, consider a player, called Eve, who plays against an opponent, called Alice. Dependent on a coin throw, in each round, Alice performs measurement $X_A$ or $Y_A$ on a local quantum state. Thereby the weights $\lambda$ and $\mu$ are the weights of the coin and the l.h.s. of  \eqref{wheightedbound} describes the total uncertainty Eve has on Alice's outcomes in each round. To be more precise, up to a $(\lambda,\mu)$-dependent constant, the l.h.s of \eqref{wheightedbound} equals the  Shannon entropy of the outcome distribution $\lambda \p^{X_{A}}_\rho \oplus \mu \p^{Y_{A}}_\rho$.

Eve's role in this game is to first choose a state $\rho$, observe the coin throw, wait for the measurements to be performed by Alice, and then ask binary questions to her opponent in order to get certainty on the outcomes.
Thereby, the Shannon entropy sum on the l.h.s of \eqref{wheightedbound} (with logarithm to the base 2) equals the expected amount questions Eve has to ask using an optimal strategy based on a fixed $\rho$.
Hence, the value $c(\lambda,\mu)$ denotes the minimal amount of expected questions, attainable by choosing an optimal $\rho$.

For a bipartite setting, Fig.~\ref{fig:bipartsetting}, a second player, say Bob, joins the game. Here, Eve will play the above game against Alice and Bob, simultaneously.
Thereby, Alice and Bob share a common coin, and,  therefore, apply measurements with the same labels ($X_{AB}$ or $Y_{AB}$). The obvious question that arises in this context is if Eve gets an advantage in this simultaneous game by using an entangled state or not.
 Prop.~\ref{prop1} in the next section answers the above question negatively, which is somehow unexpected as in principle the possible usage of non-classical correlations enlarges Eve's strategies. 
For example: Eve could have used a maximally entangled state, adjusted such that \textit{all} measurements Alice and Bob perform are maximally correlated. In this case the remaining uncertainty Eve has, would only be the uncertainty on the outcomes of one of the parties. However, the marginals of a maximally entangled state are maximally mixed. Hence, Eve still has a serious amount of uncertainty ($\log d)$, which turns out to be not small enough for beating a strategy based on minimizing the uncertainty of the local measurements individually. For the case of product-MUBs in prime square dimension \cite{ourentro}, it turns out that the minimal uncertainty realizable by a maximally entangled state actually equals the optimal bound.

\vspace{0.2cm}

\noindent \textbf{Linear uncertainty: the positive convex hull}\\
\begin{figure}[t]
\centering
\def\svgwidth{0.4\textwidth}
\includegraphics[width=0.45\textwidth]{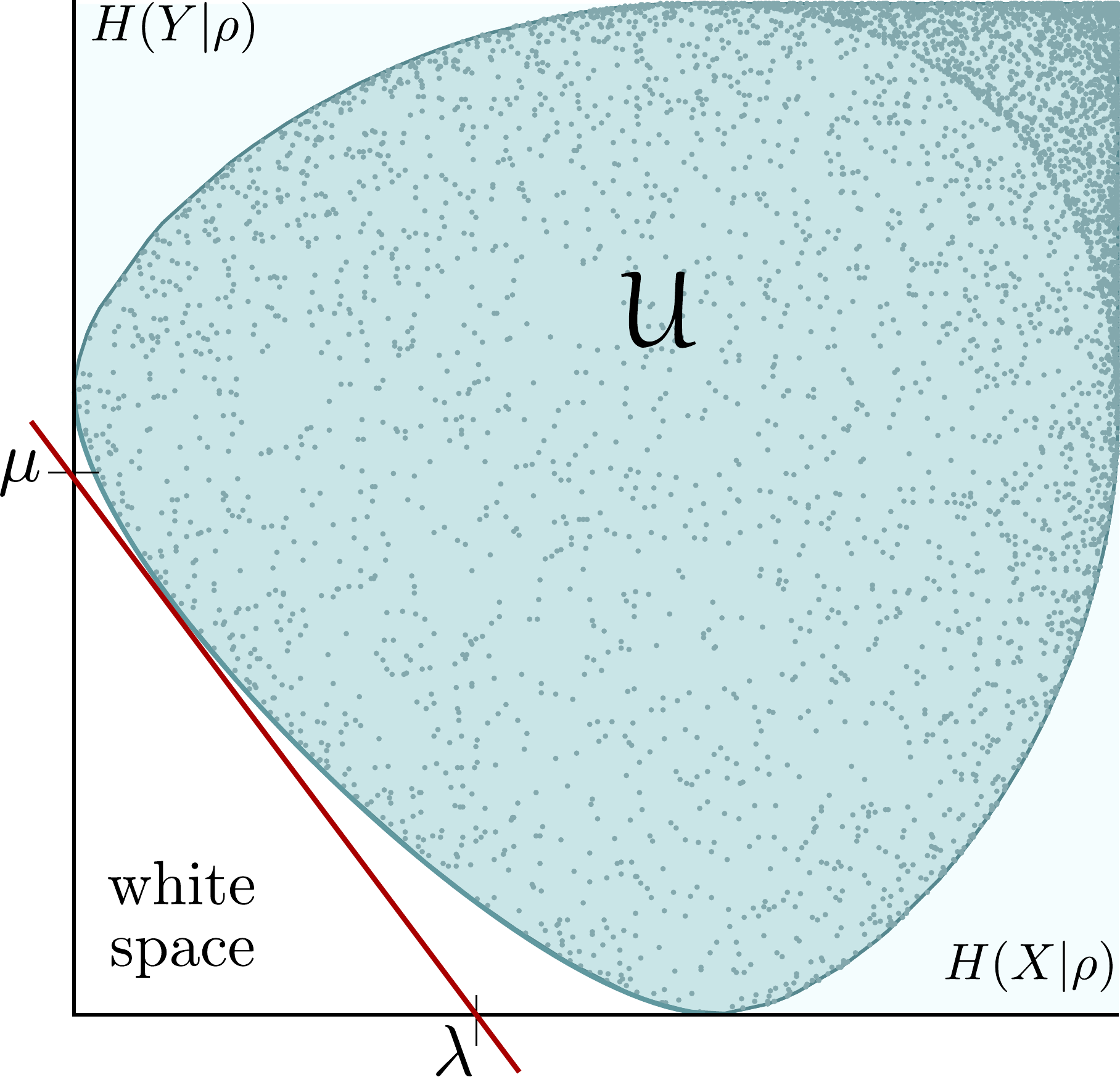}
\caption{Uncertainty set for measurements performed on a qubit. \label{fig:settingregion}
Any linear uncertainty relation, \eqref{wheightedbound}, with weights $(\lambda,\mu)$, gives the description of a tangent to the uncertainty set. All attainable pairs of entropies lie above this tangent.}
\end{figure} 
The second interpretation comes from considering the set of all attainable uncertainty pairs, the so called uncertainty set 
\begin{align}
\mathcal{U}=\left\{ \left(H(X|\rho),H(Y|\rho)\right) | \rho \textit{ is a quantum state} \right\}.
\end{align} 
In principle this set contains all information on the uncertainty trade-off between two measurements. More precisely, the white space in the lower-left corner of a diagram like Fig.~\ref{fig:settingregion} indicates that both uncertainties cannot be small simultaneously. In this context, a state-independent uncertainty gives a quantitative description of this white space.  
Unfortunately, it turns out that computing $\mathcal{U}$ can be very hard, because the whole state-space has to be considered.
Here a linear inequality, like \eqref{wheightedbound}, gives an outer approximation of this set. More precisely, if $c(\lambda,\mu)$ is the optimal constant in \eqref{wheightedbound}, this inequality describes a halfspace bounded from the lower-left by a tangent on $\mathcal{U}$. This tangent has the slope $\mu/\lambda$. The points on which this tangent touches the boundary of $\mathcal{U}$ corresponds to states which realize equality in \eqref{wheightedbound}. Those states are called minimal-uncertainty states. 
 Given all those tangents, i.e. $c(\lambda,\mu)$ for all positive $(\lambda,\mu)$, we can intersect all corresponding halfspaces and get a convex set which we call the {\it positive convex hull} of $\mathcal{U}$, denoted by $\overline{\mathcal{U}}$ in the following. Geometrically, the positive convex hull can be constructed by taking the convex hull of $\mathcal{U}$ and adding to it all points  that have bigger uncertainties then, at least, some point in $\mathcal{U}$. 
 
 If $\mathcal{U}$ is convex, like in the example above, $\overline{\mathcal{U}}$ contains the full information on the relevant parts of $\mathcal{U}$. If $\mathcal{U}$ is not convex, $\overline{\mathcal{U}}$ still gives a variety of state independent uncertainty relations, but there is still place for finding improvements, see \cite{ourentro}.

\section{Additivity, implications and applications}
\label{mainres}
\pdfoutput=1

We are now able to state our main result
\begin{prop}[Additivity of linear uncertainty relations]\label{prop1}
Let $c_A(\lambda,\mu)$ and $c_B(\lambda,\mu)$ be state-independent lower bounds on the linear entropic uncertainty for local measurements $X_A,X_B$ and $Y_A,Y_B$, with weights $(\lambda,\mu)$. This means we have that
\begin{align}\label{prop4}
\lambda H(X_{A}|\rho_A)+ \mu  H(Y_{A}|\rho_A) \geq c_A(\lambda,\mu)\nonumber \\ 
\lambda H(X_{B}|\rho_B)+ \mu  H(Y_{B}|\rho_B) \geq c_B(\lambda,\mu)
\end{align}
holds on any state $\rho_A$ from $\mathcal{B}(\mathcal{H}_A)$ and $\rho_B$ from $\mathcal{B}(\mathcal{H}_B)$. 
Let $X_{AB}$ and $Y_{AB}$ be the joint global measurements that arise from locally performing $X_A,X_B$ and $Y_A,Y_B$ respectively. Then
\begin{align}
\lambda H(X_{AB}|\rho_{AB})+ \mu  H(Y_{AB}|\rho_{AB}) \geq c_{A}(\lambda,\mu)+c_{B}(\lambda,\mu) \label{eq:prop2}
\end{align}
holds for all states $\rho_{AB}$ from $\mathcal{B}(\mathcal{H}_{A}\otimes\mathcal{H}_B)$. Furthermore, if $c_A$ and $c_B$ are optimal bounds, then  
\begin{align}\label{eq:prop3}
c_{AB}(\lambda,\mu):=c_A(\lambda,\mu)+c_B(\lambda,\mu)
\end{align}
is the optimal bound in \eqref{eq:prop2}, i.e.  linear entropic uncertainty relations are additive.
\end{prop}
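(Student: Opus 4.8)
My plan is to prove the two inequalities $c_{AB}(\lambda,\mu)\le c_A(\lambda,\mu)+c_B(\lambda,\mu)$ and $c_{AB}(\lambda,\mu)\ge c_A(\lambda,\mu)+c_B(\lambda,\mu)$ separately, where $c_A,c_B$ now denote the \emph{optimal} local constants — attained, since the finite-dimensional state spaces are compact and the entropies continuous — and $c_{AB}$ is the optimal constant in \eqref{eq:prop2}. The first inequality is the elementary one: on a product state $\rho_A\otimes\rho_B$ the outcome distribution of $X_{AB}$ is the product of those of $X_A$ and $X_B$ (and likewise for $Y$), so additivity of the Shannon entropy makes the left-hand side of \eqref{eq:prop2} equal to the sum of the two local expressions; minimising the (separated) sum over $\rho_A$ and $\rho_B$ gives $c_{AB}\le c_A+c_B$ and shows that the value \eqref{eq:prop3} is attained. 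Once the reverse inequality is established, \eqref{eq:prop2} for arbitrary valid local bounds follows since any such bounds satisfy $c_A(\lambda,\mu)+c_B(\lambda,\mu)\le c_A^{\mathrm{opt}}+c_B^{\mathrm{opt}}=c_{AB}^{\mathrm{opt}}$, and \eqref{eq:prop3} is then precisely the optimality claim.

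For the reverse inequality — that no, possibly entangled, $\rho_{AB}$ can beat $c_A+c_B$ — I would first observe that $\rho\mapsto\lambda H(X|\rho)+\mu H(Y|\rho)$ is concave (the Shannon entropy composed with the affine map $\rho\mapsto\p^x_\rho$), so its minimum over the state space is attained at a pure state. For a pure state, expanding in the $Y$-eigenbasis identifies $\sqrt{\p^y_\rho}$ with a unit vector $c$ and $\sqrt{\p^x_\rho}$ with $|Uc|$ componentwise, so the operator-norm estimate $\|Uc\|_{q}\le\|U\|_{p\to q}\|c\|_{p}$ produces, for $p<2<q$, a two-parameter family of Rényi uncertainty relations with constants $-\log\|U\|_{p\to q}$ up to an explicit positive prefactor. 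These relations are \emph{optimal}: the estimate is saturated by the $\ell^2$-renormalised maximiser of $\|U\|_{p\to q}$, which is a legitimate state vector. Sending the Rényi orders to $1$ along the path through $(2,2)$ that the weights $(\lambda,\mu)$ single out converts them into the optimal \emph{Shannon} linear relation: this is what Thm.\ref{thm1} delivers, representing $c_A(\lambda,\mu)$ (and analogously $c_B$, $c_{AB}$) as a directional derivative of $(p,q)\mapsto\log\|U\|_{p\to q}$ at $(2,2)$ taken into the sector $p<2<q$, with $U$ the appropriate overlap unitary. (Bounding $\|U\|_{p\to q}$ along the conjugate line $q=p'$ by the maximal overlap $c_0=\max_{ij}|U_{ij}|$ collapses this representation to the weighted Maassen--Uffink bound of Lem.\ref{lem1}.)

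The global measurements have overlap unitary $U_A\otimes U_B$, so it remains only to insert this into the representation, and this is where Thm.\ref{thm2} enters: on the range $p\le q$ the $(p,q)$-norm is multiplicative, hence $\log\|U_A\otimes U_B\|_{p\to q}=\log\|U_A\|_{p\to q}+\log\|U_B\|_{p\to q}$ throughout a neighbourhood of $(2,2)$ inside $\{p<2<q\}$, so the directional derivative appearing in Thm.\ref{thm1} is additive. Combining, $c_{AB}(\lambda,\mu)=c_A(\lambda,\mu)+c_B(\lambda,\mu)$, which is the reverse inequality; together with the elementary direction this proves \eqref{eq:prop2} and \eqref{eq:prop3}, and the argument iterates over bipartitions to give the multipartite version.

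The step I expect to be delicate is exactly the correspondence in Thm.\ref{thm1}. The function $\log\|U\|_{p\to q}$ is non-smooth at $(2,2)$: it vanishes identically on the edges $\{p=2\}$ and $\{q=2\}$, so the only informative derivative is the \emph{diagonal} one into the open sector $p<2<q$, and one has to show both that this derivative exists and reproduces $\min_\rho[\lambda H(X|\rho)+\mu H(Y|\rho)]$ — exchanging the minimisation over states with the limit of the Rényi orders, via uniform continuity of the Rényi entropy in its order on the compact state space — and that the limiting path stays within the range covered by Thm.\ref{thm2}. The other substantial ingredient is Thm.\ref{thm2} itself, whose non-trivial half is $\|U_A\otimes U_B\|_{p\to q}\le\|U_A\|_{p\to q}\|U_B\|_{p\to q}$ for $p\le q$ (the reverse being immediate from tensoring maximisers); but that is handled in its own section.
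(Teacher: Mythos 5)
Your proposal follows essentially the same route as the paper: both directions reduce to representing the optimal constants as limits of logarithms of $(p,q)$-norms near $(p,q)=(2,2)$ (Thm.~\ref{thm1}) and then invoking multiplicativity of the $(p,q)$-norm of $U_A\otimes U_B$ in the range $1\le p\le q$ (Thm.~\ref{thm2}, Lem.~\ref{lem2}), with your ``easy direction'' $c_{AB}\le c_A+c_B$ via product states appearing in the paper as Rem.~\ref{rem1}. The only divergence is internal to the limit statement of Thm.~\ref{thm1}(iii), which you would justify via the exact R\'enyi-entropy identity plus uniform convergence of R\'enyi to Shannon entropy on the compact state space rather than via the paper's Golden--Thompson/first-order-expansion argument; this does not change the proof of Prop.~\ref{prop1} itself.
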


The proof of this proposition is placed at the end of Sec.~\ref{additivity}. We will proceed this section by collecting some remarks related to the above proposition: 
\begin{remark}[Product states]\label{rem1}\normalfont
Assume that 
$c_{A}(\lambda,\mu)$ and $c_{B}(\lambda,\mu)$ are optimal constants, and $\phi_A$ and $\phi_B$ are the states that saturate the corresponding uncertainty relations \eqref{prop4}. Then the product state $\phi_{AB}:=\phi_A\otimes\phi_B$ saturates \eqref{eq:prop2}, due to the additivity of the Shannon-entropy. 
However, this does not imply that all states that saturate \eqref{prop4} have to be product states. Examples for this, involving MUBs of product form, are provided in \cite{ourentro}. 
\end{remark}

\begin{remark}[Minkowski sums of uncertainty regions]\normalfont
Prop.~\ref{prop1} shows how the uncertainty set $\mathcal{U}_{AB}$, of the product measurement, relates to the uncertainty sets $\mathcal{U}_A$ and $\mathcal{U}_B$ of corresponding local measurements: For the case of an optimal $c_{AB}(\lambda,\mu)$, and fixed $(\lambda,\mu)$, equality in \eqref{eq:prop2} can always be realized by product states (see Rem.~\ref{rem1}). 
In an uncertainty diagram, like Fig.~\ref{fig:ucregionbipart}, those states correspond to points on the lower-left boundary of an uncertainty set, and, in general, they produce the finite extreme points of the positive convex hull of an uncertainty set.

\begin{figure}[h]
\includegraphics[width=0.48\textwidth]{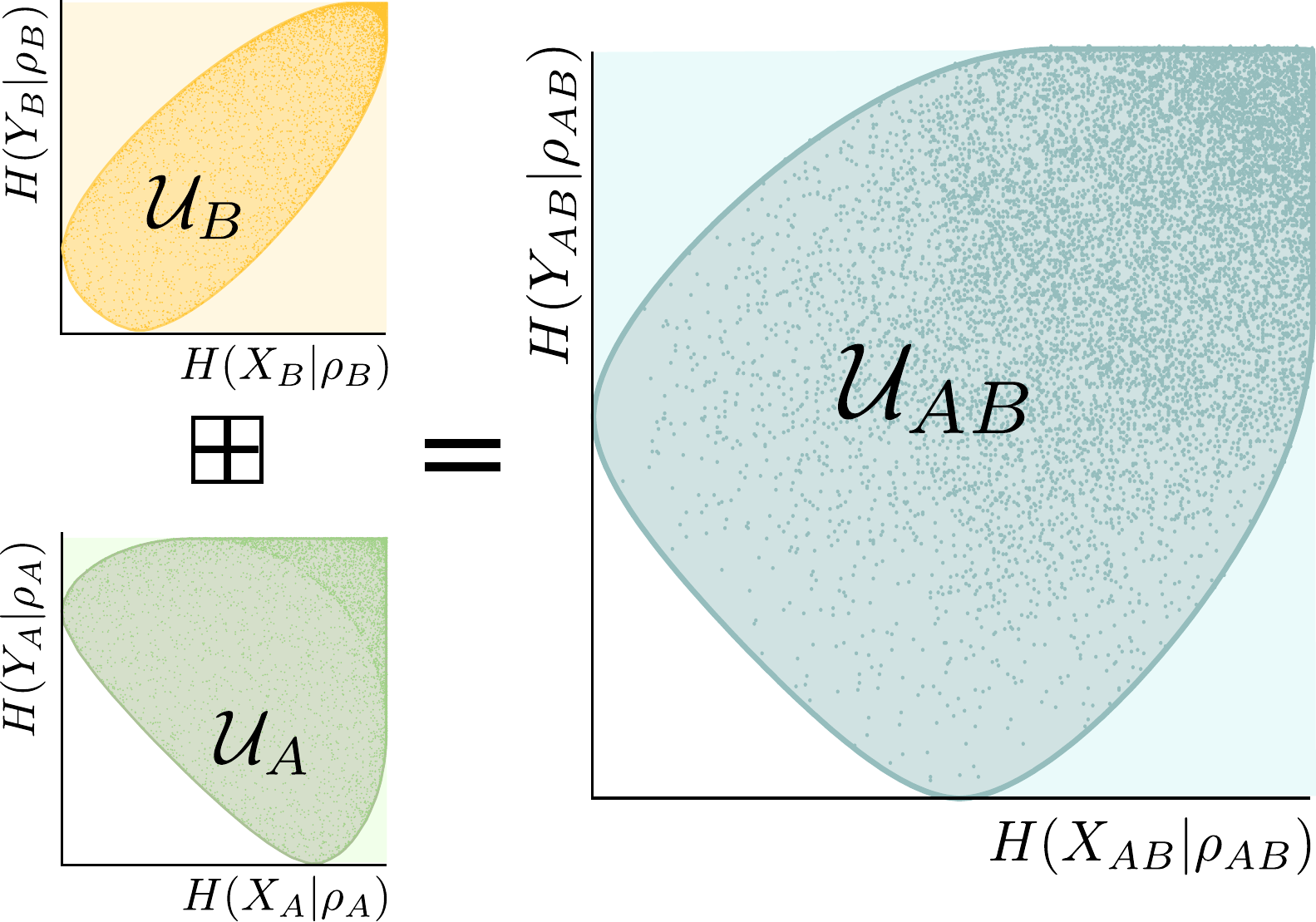}
\caption{\label{fig:ucregionbipart} Uncertainty sets of local measurements can be combined by the Minkowski sum: Uncertainty sets (green and yellow) for two pairs of local measurements on Qubits and the uncertainty set of the corresponding global measurements (blue).}
\end{figure}
For product states we have the additivity of the Shannon entropy, which gives
\begin{align}
\begin{pmatrix}
H(X_{AB}|\phi_A\otimes\phi_B)\\
H(Y_{AB}|\phi_A\otimes\phi_B)
\end{pmatrix} 
=
\begin{pmatrix}
H(X_{A}|\phi_A)\\
H(Y_{A}|\phi_A)
\end{pmatrix} 
+
\begin{pmatrix}
H(X_{B}|\phi_B)\\
H(Y_{B}|\phi_B)
\end{pmatrix} 
\end{align} 
This implies that we can get every extreme point of  $\overline{\mathcal{U}}_{AB}$ by taking the sum of two extreme points of $\overline{\mathcal{U}}_A$ and $\overline{\mathcal{U}}_B$. Due to convexity the same holds for all points in $\overline{\mathcal{U}}_{AB}$ and we can get this set as Minkowski sum \cite{minkowskisum}. 
\begin{align}
 \overline{\mathcal{U}}_{AB}=\overline{\mathcal{U}}_{AB} \boxplus\overline{\mathcal{U}}_B
\end{align} For convex uncertainty regions, arising from local measurements, this is depicted in Fig.~\ref{fig:ucregionbipart}. 
For this example, it is also true that $\mathcal{U}_{AB}$ itself is given as Minkowski sum of local uncertainty sets. However, we have to note, this behavior cannot be concluded from Prop.~\ref{prop1} alone. 
\end{remark}

\begin{remark}[Relation to existing entanglement witnesses]\label{rem3} \normalfont 
A well know method for constructing non-linear entanglement witnesses is based on computing the minimal value of a functional, like the sum of uncertainties \cite{hofmann,guhne2,spinsqueezing}, attainable on separable states. Given an unknown quantum state, the value of this functional is measured. If the measured value undergoes the limit set by separable states, the presence of entanglement is witnessed. 
For uncertainty relations based on the sum of general Schur concave functionals this method was proposed in \cite{chinaSchur}, including Shannon entropy, i.e.  the l.h.s. of \eqref{eq:prop2}, as central example. 

Our result Prop.~\ref{prop1} shows that this method will not work for Shannon entropies, because there is no entangled state that undergoes the limit set by separable states. We note that there is no mathematical contradiction between Prop.~\ref{prop1} and \cite{chinaSchur}. We only show that the set of examples for the method proposed in \cite{chinaSchur} is empty.

For uncertainty relations in terms of Shannon, Tsallis and Renyi entropies a similar procedure for constructing witnesses was  proposed by \cite{guhne, guhne2}. Here explicit examples for states, that can be witnessed to be entangled, were provided. Again, our proposition Prop.~\ref{prop1} is not in contradiction to this work because in \cite{guhne, guhne2} observables with a non-local degeneracy where considered. 

\end{remark}

Prop.~\ref{prop1} can easily be generalized to a multipartite setting, see Fig.~\ref{fig:settingmulti} : 
\begin{figure}[t]
\includegraphics[width=0.42\textwidth]{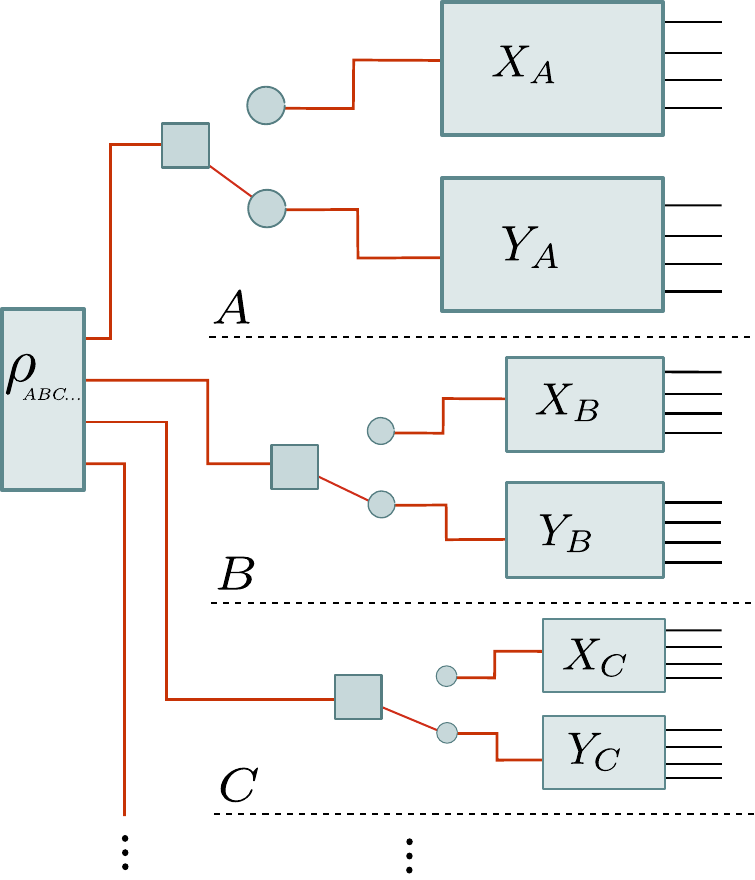}
\caption{\label{fig:settingmulti} Multiparite setting: Additivity of entropic uncertainty relations also holds if a pair of global product measurements for many local parties is considered.}
\end{figure}
\begin{cor}[Generalization to multipartite measurements]\normalfont
Assume parties $A_1\ldots A_n$ that locally perform measurements, $X_{A_1},\ldots,X_{A_n}$ or $Y_{A_1},\ldots,Y_{A_n}$, with weights 
$\vec\lambda=(\lambda_1,\ldots,\lambda_n )$.  
In analogy to \eqref{prop4}, let $c_{A_1}(\vec\lambda),\ldots,c_{A_n}(\vec\lambda)$ denote optimal local bounds and let $c_{A_1\cdots A_n}(\vec\lambda)$ be the optimal bound corresponding to product measurements $X_{A_1\ldots A_n}$ and $Y_{A_1\ldots A_n}$. We have
\begin{align}
c_{A_1\ldots A_n}(\vec\lambda)= \sum_{i=1}^n c_{A_i}(\vec\lambda)
\end{align}
This follows by iterating \eqref{eq:prop3}.
\end{cor}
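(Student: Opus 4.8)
The plan is to prove the statement by induction on the number of parties $n$, using Prop.~\ref{prop1} — equivalently the optimal additivity \eqref{eq:prop3} — as both the base case and the single inductive step; the real content of the corollary is the bookkeeping of this iteration, so I would not re-derive anything already contained in Prop.~\ref{prop1}. The base case $n=2$ is exactly Prop.~\ref{prop1}: setting $A=A_1$ and $B=A_2$ gives $c_{A_1A_2}(\vec\lambda)=c_{A_1}(\vec\lambda)+c_{A_2}(\vec\lambda)$, and this is the \emph{optimal} bound. The weight vector $\vec\lambda$ is held fixed throughout, so that every invocation of Prop.~\ref{prop1} uses the same weights on both factors and on the global measurement.

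For the inductive step I would group the first $n-1$ parties into a single composite subsystem $\mathcal{H}_{A_1}\otimes\cdots\otimes\mathcal{H}_{A_{n-1}}$ and treat $\mathcal{H}_{A_n}$ as the second factor, i.e.\ apply Prop.~\ref{prop1} across the bipartition $(A_1\cdots A_{n-1})\mid A_n$. First I would verify that the hypotheses of Prop.~\ref{prop1} hold across this cut. The global measurements split as $X_{A_1\cdots A_n}=X_{A_1\cdots A_{n-1}}\otimes X_{A_n}$, and likewise for $Y$, so they are genuine product measurements across the chosen bipartition; the further internal product structure of $X_{A_1\cdots A_{n-1}}$ on the composite side is irrelevant to Prop.~\ref{prop1}, which only asks for a pair of non-degenerate projective measurements on each factor. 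Non-degeneracy on the composite factor is inherited, since a tensor product of rank-one projectors is again rank-one. Prop.~\ref{prop1} then yields
\begin{align}
c_{A_1\cdots A_n}(\vec\lambda)=c_{A_1\cdots A_{n-1}}(\vec\lambda)+c_{A_n}(\vec\lambda),
\end{align}
and substituting the induction hypothesis $c_{A_1\cdots A_{n-1}}(\vec\lambda)=\sum_{i=1}^{n-1}c_{A_i}(\vec\lambda)$ produces the claimed identity $c_{A_1\cdots A_n}(\vec\lambda)=\sum_{i=1}^{n}c_{A_i}(\vec\lambda)$.

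The step I expect to require the most care is the propagation of \emph{optimality} through the induction, as opposed to the mere validity of the additive lower bound. The lower-bound inequality \eqref{eq:prop2} iterates trivially, but the assertion of Prop.~\ref{prop1} that the sum is the \emph{best} constant carries a hypothesis: the two bounds being added must themselves be optimal. On the singleton factor $A_n$ this holds by assumption, whereas on the composite factor $A_1\cdots A_{n-1}$ optimality is not automatic — it is precisely the corollary for $n-1$ parties. Hence the induction hypothesis must carry optimality and not just the inequality, and the crux is that Prop.~\ref{prop1} is strong enough both to consume an optimal composite bound and to return an optimal one at the next level. A secondary point worth flagging is that this bipartite reduction never narrows the minimisation to a subclass of states: Prop.~\ref{prop1} already certifies its bound against arbitrary entanglement across the single cut $(A_1\cdots A_{n-1})\mid A_n$, so iterating it certifies $\sum_i c_{A_i}(\vec\lambda)$ against all states on the full tensor product, including those entangled across every cut simultaneously.
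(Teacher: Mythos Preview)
Your proposal is correct and is precisely the argument the paper intends: the paper's own justification is the single line ``This follows by iterating \eqref{eq:prop3}'', and your induction on $n$ across the bipartition $(A_1\cdots A_{n-1})\mid A_n$ is exactly that iteration spelled out, including the care you take to propagate optimality through the induction hypothesis.
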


\begin{remark}[Generalization to three measurements]\normalfont
The generalization of Prop.~\ref{prop1} to three measurements, say $X_{AB}$, $Y_{AB}$ and $Z_{AB}$, fails in general. The following  counterexample was provided by O. Gühne \cite{guhneexample}: 
For both parties we consider local measurements deduced from the three Pauli operators on a qubit and take all weights equal to one. In short hand notation we write $X_{AB}=\sigma_X\otimes \sigma_X$, $Y_{AB}=\sigma_Y\otimes \sigma_Y$, and $Z_{AB}=\sigma_Z\otimes \sigma_Z$. 
In this case, the minimal local uncertainty sum is attained on eigenstates  of the Pauli operators. If such a state is measured, the entropy for one of the measurements is zero and maximal for the others. Hence, the local uncertainty sum is always bigger than $2$ [\texttt{bit}]. Therefore we have
\begin{align}
&H\left(\sigma_X\otimes \sigma_X|\phi_A\otimes\phi_B \right)+&\nonumber\\&H\left(\sigma_Y\otimes \sigma_Y|\phi_A\otimes\phi_B \right)+&\nonumber\\&H\left(\sigma_Z\otimes \sigma_Z| \phi_A\otimes\phi_B\right)\geq 4
\end{align}
for all product states. 
In contrast to this a Bell state, say $\Psi^-$, will give the entropy of $1$[\texttt{bit}], for all above measurements. Hence we have,
\begin{align}
&H\left(\sigma_X\otimes \sigma_X|\Psi^-\right)+&\nonumber\\&H\left(\sigma_Y\otimes \sigma_Y|\Psi^-\right)+&\nonumber\\&H\left(\sigma_Z\otimes \sigma_Z| \Psi^-\right)=3\ngeq 4.
\end{align}
\end{remark}


\section{Lower bounds from $(p,q)$-norms}
\label{pqnorms}
\pdfoutput=1

The quite standard technique for analyzing a linear uncertainty relation is to connect it to the $(p,q)$-\textit{norm} (see \eqref{pqnormdef} below) of the basis transformation $U$. Thereby, the majority of previous works in this field is concentrating only on handling the case of equal weights $\lambda=\mu=1$, which is connected to the $(p,q)$-norm for the case $1/p+1/q=1$. However, for the purpose of this work, i.e. for proving Prop.~\ref{prop1}, we have to extend this connection to arbitrary $(\lambda,\mu)$. We will do this by Thm.~\ref{thm1} on the next page.

A historically important example for the use of the connection between $(p,q)$-norms and entropic uncertainties, is provided by Bialynicki-Birula and Mycielski \cite{bbm}. They used Beckner's result \cite{beckner}, who computed the $(p,q)$-\textit{norm} of the Fourier-Transfromation, for proving the corresponding uncertainty relation, between position and momentum, conjectured by Hirschmann \cite{hirschmann}. Also Maassen and Uffink \cite{muff} took this way for proving their famous relation. Our result gives a direct generalization of this, meaning we will recover the Maassen and Uffink relation at the end of this section as special case of \eqref{generalmaassen}. Albeit, before stating our result, we will start this section by shortly reviewing the previously known way for connecting  $(p,q)$-norms with linear uncertainty relation, see also \cite{hanstalk,hans2} for further details:

The $(p,q)$-norm, i.e the $l^p\rightarrow l^q$ operator norm, of a basis transformation $U$ is given by
\begin{align}
\label{pqnormdef}
\Vert U \Vert_{q,p} :=\sup_{\phi\in\mathcal{H}} \frac{\Vert U \phi\Vert_q }{\Vert \phi \Vert_p } .
\end{align}
Here, the limit of $\Vert U \Vert_{q,p}$ for $(p,q)\rightarrow(2,2)$ goes to $1$. However, when $p$ and $q$ are fixed on the curve $1/p+1/q=1$, the leading order of $\Vert U \Vert_{q,p}$ around $(p,q)=(2,2)$ recovers the uncertainty relation \eqref{wheightedbound} in the case of equal weights $\lambda=\mu=1/2$, see \cite{hirschmann,bbm}. 

More precisely, taking the negative logarithm of \eqref{pqnormdef} gives 
\begin{align}\label{renyi1}
-\log\Vert U \Vert_{q,p}=\inf_{\phi\in\mathcal{H}} \; \log \Vert\phi\Vert_p - \log \Vert  U  \phi\Vert_q \; .
\end{align}
Here, we can identify the squared modulus of the components of $\phi$ as  probabilities of the $X$ and $Y$ measurement outcomes
\begin{align}
|(\phi)_i|^2&=\bra{\phi} X_i \ket{\phi}=(\p_\phi^X)_i \nonumber\\ |(U\phi)_i|^2&=\bra{\phi} Y_i \ket{\phi}\;=(\p_\phi^Y)_i 
\end{align}
and substitute 
\begin{align}\label{renyi2}
\Vert\phi\Vert_p = \left( \Vert  \p_\phi^X \Vert_{p/2} \right)^2 \text{\; and \;} \Vert U \phi\Vert_{q} =\left( \Vert \p_\phi^Y \Vert_{q/2} \right)^2 \;.
\end{align}
By this, \eqref{renyi1} gives a linear relation in terms of the $\alpha$-Renyi entropy \cite{renyi}, $H_\alpha(\p)=\frac{\alpha}{1-\alpha}\log(\Vert \p \Vert_\alpha )$. Here we get  
\begin{align}\label{renyi3}
\inf_{\phi\in\mathcal{H}} \frac{2-p}{p} H_{p/2}(X|\phi)- \frac{2-q}{q}  H_{q/2}(Y|\phi) = -\log\Vert U \Vert_{q,p}^2\; . 
\end{align}
If we evaluate this on the curve $1/p+1/q=1$, for $p\leq 2\leq q$, we can use 
\begin{align}
\frac{2-p}{p}=\frac{1}{p}-\frac{1}{q}=\frac{q-2}{q},
\end{align} 
 which can be employed to \eqref{renyi3}, in order to get
\begin{align}
\label{renyi4}
\inf_{\phi\in\mathcal{H}}  \; H_{p/2}(X|\phi)+H_{q/2}(Y|\phi) = \left(\frac{1}{q}-\frac{1}{p}\right)^{-1}\log\Vert U \Vert_{q,p}^2\; . 
\end{align}
Here, the limit  $(p,q)\rightarrow(2,2)$, in the l.h.s of \eqref{renyi4}, gives the limit from the Renyi to the Shannon entropy. This gives the l.h.s. of the uncertainty relation \eqref{wheightedbound} for $\lambda=\mu=1$. Hence, the functional dependence of $\Vert U \Vert_{q,p}$ on $(p,q)$ in the limit  $(p,q)\rightarrow(2,2)$ gives the optimal bound $c(1,1)$, in \eqref{wheightedbound}. 
For the case of the $\mathcal{L}^2(\mathbb{R})$-Fourier transformation  the norm $\Vert U_\mathcal{F}\Vert_{q,p}=\sqrt{p^{1/p}}/\sqrt{q^{1/q}}$ was computed by Beckner \cite{beckner}, leading to $c (1,1) = \log(\pi \rm e )$. 
However, to the best of our knowledge, computing $\Vert U\Vert_{q,p}$, for general $U$ and $(p,q)$, is an outstanding problem, and presumably very hard \cite{nphard,nphard2}. Albeit, for special choices of $(p,q)$ this problem gets treatable, see \cite{solvablecases} for a list of those. The known cases include $p=q=2$, $p=\infty$ or $q=\infty$ such as $p=1$ or $q=1$. 

The central idea of Maassen's and Uffink's work \cite{muff} is to show that the easy case of $(p=1,q=\infty)$, here we have $\Vert U\Vert_{1,\infty}=\max_{ij}|U_{ij}|$, gives a lower bound on $c(1,1)$. More precisely, they show that, for $1\leq p\leq 2$ and on the line $1/p+1/q=1$, the r.h.s. of \eqref{renyi4} approaches $c(1,1)$ from below. Note that this is far from being obvious. Explicitly, for $ p\leq 2 \leq q$ we have $H_{q/2}(Y|\phi) \geq H(Y|\phi)$ and $H_{p/2}(X|\phi)\leq H(X|\phi)$, so one term approaches the limit from above and the other approaches the limit from below. Whereas Maassen and Uffink showed, using the Riesz-Thorin interpolation \cite{riesz,thorin}, that the $\inf_\phi$ of the sum of both approaches the limit from below.   

The following Theorem, Thm.\ref{thm1}, extends the above to the case of arbitrary $(\lambda,\mu)$. Notably, we have to take $(p,q)$ from curves with $1/p+1/q \neq 1$, those are depicted in Fig.~\ref{fig:entrocurves}.  In contrast to Maassen and Uffink, the central inequality we use is the $\infty$-norm versions of the Golden Thompson inequality (see \cite{golden, thompson, goldenthompson} and the blog of T.Tao \cite{tao} for a proof and related discussions).   

\begin{thm}
\label{thm1}
Let $c(\lambda,\mu)$, with $\lambda,\mu \in \mathbb{R}_{+}$, be the optimal constant in the linear weighted entropic uncertainty relation 
\begin{align}
c(\lambda,\mu):=\inf_{\rho} \lambda H\left(X|\rho\right) + \mu H\left(Y|\rho\right). 
\end{align} Then:\vspace{0.1cm} 

\noindent(i) $c(\lambda,\mu)$ is bounded from below by $-N \log \left( \omega_N(\lambda,\mu)\right)$  
\begin{align}\label{rsstuff}
&\text{with \quad}\omega_N(\lambda,\mu)= \sup_{\substack{\x \in B_r(\mathbb{C}^d)\\ \y \in B_s(\mathbb{C}^d)}} \left| \x^\dagger U \y \right|
                \\&\text{and \quad} r=\frac{2N}{N+2\lambda}\quad s=\frac{2N}{N+2\mu}         
\end{align} 
where $$B_r(\Omega):=\left\{ \x \in \Omega | \; 1\geq\Vert \x \Vert_r \right\}$$
denotes the unit $r$-norm Ball on $\Omega$. \vspace{0.2cm}

\noindent(ii)
For $\lambda,\mu\leq N/2 $ we can write 
\begin{align}
&\omega_N(\lambda,\mu)= \sup_{\phi\in\mathbb{C}^d} \frac{\Vert U \phi\Vert_{r'}}{\Vert \phi \Vert_{s}}=\sup_{\phi\in\mathbb{C}^d} \frac{\Vert U \phi\Vert_{s'}}{\Vert \phi \Vert_{r}}
\\&\text{ with } r'=\frac{2N}{N-2\lambda}\quad s'=\frac{2N}{N-2\mu} \\            
\end{align}

\noindent(iii) For $\mu, \lambda \in \mathbb{R}^+\backslash\{0\}$, we have  
\begin{align}
c(\lambda,\mu) = \lim_{N \rightarrow\infty} -N \log \left( \omega_N(\lambda,\mu)\right)
\end{align} 
\end{thm}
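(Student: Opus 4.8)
The plan is to prove the three items in the order (ii)--(i)--(iii): item~(ii) is a duality bookkeeping statement used inside item~(i), item~(i) carries the analytic content, and item~(iii) follows by a limiting argument. \emph{Item~(ii)} is pure $\ell^p$-duality: for $\lambda,\mu\le N/2$ one has $r,s\ge1$, so $B_r,B_s$ are genuine convex balls and $\sup_{\Vert\x\Vert_r\le1}|\x^\dagger\mathbf z|=\Vert\mathbf z\Vert_{r'}$ holds; eliminating $\x$ first turns $\omega_N(\lambda,\mu)=\sup_{\x\in B_r,\y\in B_s}|\x^\dagger U\y|$ into $\sup_{\y\in B_s}\Vert U\y\Vert_{r'}=\sup_\phi\Vert U\phi\Vert_{r'}/\Vert\phi\Vert_s$, and eliminating $\y$ first gives the second expression (using also $Y_i=UX_iU^\dagger$ to keep $U$ rather than $U^\dagger$).

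\emph{Item~(i).} Since $\rho\mapsto H(X|\rho)$ and $\rho\mapsto H(Y|\rho)$ are concave, the infimum defining $c(\lambda,\mu)$ is attained on a pure state $\psi$, which by a density argument I may take to have full support for both measurements; write $\p=\p^{X}_{\psi}$, $\q=\p^{Y}_{\psi}$ and introduce the ``entropic Hamiltonians'' $K_X=\sum_i(\log p_i)X_i$, $K_Y=\sum_j(\log q_j)Y_j$, so that $\langle\psi|(aK_X+bK_Y)|\psi\rangle=-aH(\p)-bH(\q)$ for all $a,b>0$. Combining $\Vert e^{M}\Vert_\infty\ge e^{\langle\psi|M|\psi\rangle}$ (valid for Hermitian $M$ and unit $\psi$), applied to $M=aK_X+bK_Y$, with the $\infty$-norm Golden--Thompson inequality and the identity $\Vert e^{aK_X/2}e^{bK_Y}e^{aK_X/2}\Vert_\infty=\Vert e^{bK_Y/2}e^{aK_X/2}\Vert_\infty^{2}$ gives
\[
e^{-aH(\p)-bH(\q)}\;\le\;\Vert e^{bK_Y/2}e^{aK_X/2}\Vert_\infty^{2}.
\]
Now $e^{aK_X/2}$ is diagonal in the $X$-basis with entries $p_i^{a/2}$ (so $\sum_i(p_i^{a/2})^{2/a}=1$) while $e^{bK_Y/2}=U\,\mathrm{diag}(q_j^{b/2})\,U^\dagger$ (with $\sum_j(q_j^{b/2})^{2/b}=1$); after relaxing the constraint that $\p$ and $\q$ come from a common $\psi$ (which only enlarges the supremum), two elementary Lagrange optimizations over the diagonal factors — one infimum and one supremum — evaluate $\sup_{\p,\q}\Vert e^{bK_Y/2}e^{aK_X/2}\Vert_\infty$ as an $\ell^p\!\to\!\ell^q$ operator norm of $U$ whose exponents, for the choice $a=2\lambda/N$ and $b=2\mu/N$, are exactly those of $\Vert U\Vert_{r',s}=\omega_N(\lambda,\mu)$ (item~(ii)). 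Taking logarithms, substituting these $a,b$, and rearranging yields $\lambda H(\p)+\mu H(\q)\ge -N\log\omega_N(\lambda,\mu)$; the infimum over $\psi$ is item~(i).

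\emph{Item~(iii).} By item~(ii) and the standard Rényi/$(p,q)$-norm identity \eqref{renyi3} (taken with $p=r$, $q=s'$),
\[
-N\log\omega_N(\lambda,\mu)=\inf_\psi\big[\lambda H_{r/2}(X|\psi)+\mu H_{s'/2}(Y|\psi)\big],
\]
where the Rényi orders $r/2=\tfrac{N}{N+2\lambda}$ and $s'/2=\tfrac{N}{N-2\mu}$ both tend to $1$ as $N\to\infty$. Since $H_\gamma\to H$ uniformly on the (compact) probability simplex as $\gamma\to1$ — the map $(\gamma,\mathbf p)\mapsto H_\gamma(\mathbf p)$ being jointly continuous with removable singularity at $\gamma=1$ — these functionals converge uniformly on the (compact) state space to $\psi\mapsto\lambda H(X|\psi)+\mu H(Y|\psi)$, hence their infima converge to $c(\lambda,\mu)$; this is item~(iii) (and it is consistent with, and sharpened at every finite $N$ by, item~(i)).

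\emph{Main obstacle.} The crux is item~(i), and specifically the recognition that the $\infty$-norm Golden--Thompson inequality is the correct substitute for the Riesz--Thorin interpolation used by Maassen and Uffink once $1/p+1/q\neq1$: term by term the two Rényi entropies appearing in $-N\log\omega_N$ sit on opposite sides of their Shannon limits (one approaches from above, the other from below), so no monotonicity argument applied to a single test state can succeed, and Golden--Thompson is exactly what controls their weighted sum globally once the optimization over diagonal conjugations is carried out. The remaining work is bookkeeping: the reduction to full-support pure states, the precise parameter range in which the Lagrange optimizations and item~(ii) are valid (effectively $N>2\max(\lambda,\mu)$, with the small-$N$ cases — where $B_r$ or $B_s$ is a non-convex star body — handled by placing the Golden--Thompson symmetrization on the other factor), and the interchange of infimum and limit in item~(iii), which the uniform convergence above supplies.
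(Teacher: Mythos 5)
Your proposal is correct, and for the analytic core --- items (i) and (ii) --- it follows essentially the same route as the paper: the $\infty$-norm Golden--Thompson inequality followed by the H\"older fusion of the diagonal factors $p_i^{\lambda/N},q_j^{\mu/N}$ with the $\ell^2$ test vectors, and plain $\ell^p$-duality for (ii). The one structural difference in (i) is the linearization step: the paper writes $H(X|\rho)=\inf_{\p}\tr(\rho A(\p))$ via positivity of the relative entropy and then turns $\inf_\rho$ into a Schatten-$\infty$ norm, whereas you fix the optimal pure state, apply the Peierls/Jensen bound $\Vert e^M\Vert_\infty\geq e^{\langle\psi|M|\psi\rangle}$ with the state's own outcome distributions, and only then relax to a supremum over all $\p,\q$; the two are interchangeable, and your version avoids an inf/sup exchange at the cost of the (correctly flagged) full-support reduction. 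Your phrase ``one infimum and one supremum'' is a slip --- both optimizations over the diagonal factors are suprema, as your own remark that relaxing ``only enlarges the supremum'' makes clear --- but it does not affect the argument. Where you genuinely depart from the paper is item (iii): the paper merely asserts that expanding the exponentials to first order in $1/N$ makes Golden--Thompson an equality, while you pass through the exact identity $-N\log\omega_N=\inf_\phi\left[\lambda H_{r/2}(X|\phi)+\mu H_{s'/2}(Y|\phi)\right]$ and invoke uniform convergence of R\'enyi to Shannon entropy on the compact simplex (Dini, using monotonicity of $H_\gamma$ in $\gamma$). This is a rigorous completion of a step the paper only sketches, and it additionally quantifies how the finite-$N$ bounds of item (i) approach $c(\lambda,\mu)$.
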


\begin{figure}[h]
\def\svgwidth{0.45\textwidth}
\input{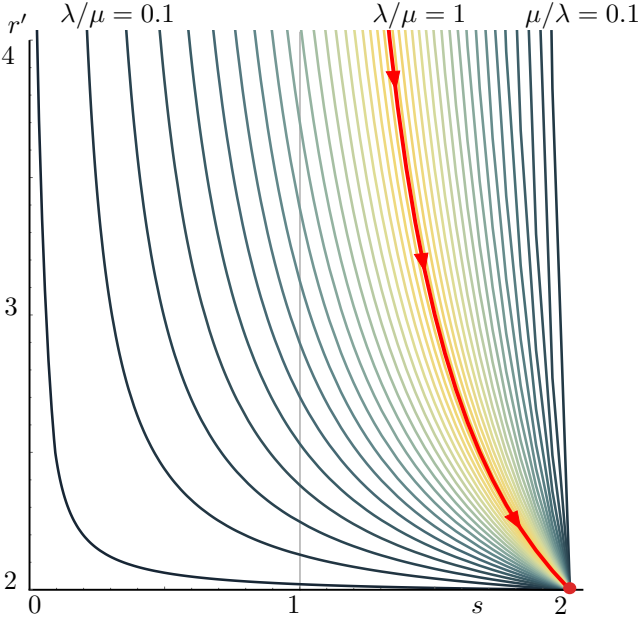}
\caption{Evaluating $\Vert U \Vert_{r',s}$ on the depicted curves gives a lower bound for $c(\lambda,\mu)$, (see Thm.~\ref{thm1}). Because $c(\lambda,\mu)$ is a linear bound it is $1$-homogenious in $(\lambda,\mu)$. Hence all information on the optimal bound $c(\lambda,\mu)$ can be recovered by knowing it for any fixed ratio $\lambda/\mu$. The thick red curve corresponds to the case $1/{r'} +1/s=1$ which gives bounds $c(1,1)$ from below. For $s=1$ the norm $\Vert U \Vert_{r',s=1}$ can be computed analytically, this gives a generalization of the Masssen and Uffink bound (see Lem.~\ref{lem1}).
}
\label{fig:entrocurves}
\end{figure}

\begin{proof}
The starting point of this proof is a modification of a technique, used by Frank and Lieb in \cite{FrankLieb}, for reproving the Maassen and Uffink bound (see also the talk of Hans Maassen \cite{hanstalk}, for a finite dimensional version).

For probability distributions $\p,\q\in B_1(\mathbb{R}^d_+)$ we define the operators
\begin{align}
A(\p):=-\sum X_i \log(p_i)  \text{ and }  B(\q):=-\sum Y_i \log(q_i)
\end{align}
such that we can rewrite the Shannon entropy as 
\begin{align}
H\left(X|\rho\right)=\tr(\rho A(\px_\rho)) \text{ and } H\left(Y|\rho\right)=\tr(\rho B(\py_\rho))
\end{align}
Based on this, we can further rewrite the Shannon entropy as an optimization over a linear function in $\rho$ by using the positivity of the relative entropy, i.e. we have $D(\p||\q)=\sum p_i \log( p_i)- \sum p_i \log( q_i) \geq 0$, which implies 
$-\sum p_i \log( q_i)\geq H(\p)$. We obtain 
\begin{align}
H\left(X|\rho\right)=\inf_\p \tr(\rho A(\p)),
\end{align} such as the respective statement for $H\left(Y|\rho\right)$ and $B(\q)$. If we employ this rewriting to $c(\lambda,\mu)$, we obtain the minimal entropy sum as a  minimization over a parametrized eigenvalue problem, namely
\begin{align}
\label{minev}
c(\lambda,\mu)&=\inf_\rho \lambda H\left(X|\rho\right)+\mu H\left(Y|\rho\right) \nonumber\\&= \inf_{\p,\q,\rho} \tr\left(\rho \left(\lambda A(\p)+\mu B(\q)\right)\right )
\end{align}
Now we will turn the minimization, over $\rho$, into a maximization by applying the convex function $e^{-x/N}$, with $N\geq1$, to the weighted sum of $A$ and $B$. This will map the smallest eigenvalue of $\lambda A+ \mu B$ to the largest of $e^{-\frac{\lambda A(p)+\mu B(q)}{N}}$ and so on. In order to get back the correct value of $c$ we will have to apply the inverse function, $-N \log(x)$,  afterwards. We get
\begin{align}
c(\lambda,\mu)=-N \log\left( \sup_{\p,\q,\rho} \tr\left(\rho e^{-\frac{\lambda A(p)+\mu B(q)}{N}} \right) \right) .
\end{align}
Due to the positivity of the operator exponential, i.e. $A$ and $B$ are hermitian, the optimization over $\rho$ is equivalent to the Schatten-$\infty$ norm. We have
\begin{align}\label{eq:proof5}
c(\lambda,\mu)=-N \log\left( \sup_{\p,\q} \left\Vert e^{-\frac{\lambda A(p)+\mu B(q)}{N}} \right\Vert_\infty\right)
\end{align}
At this point we apply the Golden-Thompson inequality 
\begin{align}
\Vert e^{ S + T}\Vert_p\leq \Vert e^S e^T\Vert_p
\end{align}
and expand the resulting exponentials, as well as the Schatten norm. We get
\begin{align}\label{eq:proof4}
c(\lambda,\mu)&\geq  \text{-}N \log\left( \sup_{\p,\q} \left\Vert e^{-\frac{\lambda A(p)}{N}}e^{-\frac{\mu B(q)}{N}}  \right\Vert_\infty\right) \\
 &=     \text{-}N \log\left( 
                \sup_{\p,\q} \left\Vert \sum_{ij} X_i p_i^{\lambda/N} Y_j q_j^{\mu/N}  
             \right\Vert_\infty \right) 
\\
 &=    \text{-}N \log\left( 
                \sup_{\substack{\p,\q \\ \ket{x},\ket{y}}} \bigl\langle x \bigl| \sum_{ij} X_i p_i^{\lambda/N} Y_j q_j^{\mu/N}  
             \bigr|y \bigr\rangle \right)\label{blahaha}
\end{align}
Now we substitute $p_i^{\lambda/N}=:\chi_i$ and $q_j^{\mu/N}=:\xi_j$, and expand $\ket{x}=\sum x_i\ket{e_i}$ and $\ket{y}=\sum y_j\ket{f_j}$, with component vectors $\x,\y \in B_2(\mathbb{C}^d)$. By this the r.h.s of \eqref{blahaha} becomes
\begin{align}\label{eq:proof3}
 -N \log\left( 
                \sup_{\mathbf{\chi},\mathbf{x}} \sup_{\mathbf{\xi},\mathbf{y}} \left| \sum_{ij} \chi_i x_i \braket{e_i|f_j}  \xi_j y_j \right|
                \right).
\end{align} 
Here we can identify $\braket{e_i|f_j}=U_{ij}$, i.e. the overlaps are the components of $U$ when represented in the basis $X$. At this point, it is straightforward to check that  $\chi\in B_{N/\lambda}(\mathbb{R}^d_+)$ and $\xi\in B_{N/\mu}(\mathbb{R}^d_+)$. Using the generalized Hölder inequality we can fuse some of the maximizations above as follows:  On one hand, we have 
\begin{align}
&\qquad&\left(\sum | \chi_i x_i|^r \right)^{\frac1 r} &\leq \Vert \x \Vert_2 \Vert \chi \Vert_{N/ \lambda}\leq 1  \\ &\text{ and }&
\left(\sum | \xi_j y_j|^s \right)^{\frac1 s} &\leq \Vert \y \Vert_2 \Vert \xi \Vert_{N/ \mu}\leq 1 \nonumber\\
&\text{ for }& \frac1 r =\frac 1 2+\frac\lambda N &\quad\text{and}\quad \frac1 s =\frac 1 2+\frac\mu N \quad ,
\end{align}
which means that the vectors $\v$ and $\w$, with $v_i=\chi_i x_i$ and $w_j=\xi_j y_j$, are in $B_r(\mathbb{C})$ and $B_s(\mathbb{C})$ respectively.

On the other hand, the converse is also true, i.e. every $\v$ and $\w$ from $B_r(\mathbb{C}^d)$ and $B_s(\mathbb{C}^d)$ can be realized by suitable choices of $\x,\chi$ and $\y,\xi$. For example, we can always set 
\begin{align}
x_i=|v_i|^{r\lambda/N} \text{ and } \chi_i=|v_i|^{2/r} e^{i\operatorname{arg}(v_i)}
\end{align}
for getting $\x$ and $\chi$ from $\v$, componentwise. For this particular choice we can check that
\begin{align}
x_i\chi_i&=|v_i|^{r(\lambda/N+1/2)}e^{i\operatorname{arg}(v_i)}\nonumber\\&=|v_i|^{r/r}e^{i\operatorname{arg}(v_i)}=v_i
\end{align}
holds, such that we will get back $\v$. Furthermore $\x\in B_{N/\lambda}(\mathbb{R}_+^d)$ and $\chi\in B_{2}(\mathbb{C}^d)$ follows by writing out
\begin{align}
\sum_i x_i^{N/\lambda}=\sum_i v_i^r\leq 1 \text{ and }\sum_i \chi_i^2=\sum_i v_i^r\leq 1.
\end{align}
If we use the above in \eqref{eq:proof3}, we can replace $\sup_{x,\chi}$ by $\sup_\v$ and $\sup_{y,\xi}$ by $\sup_\w$, in order to
 get the statement (i) with 
\begin{align}\label{min2}
\omega_N:=\sup_{\substack{\v \in B_r(\mathbb{C}^d)\\ \w \in B_s(\mathbb{C}^d)}} \left|\v^\dagger U\w\right| \quad.
\end{align}
For showing the statement (ii), we take $r'$, with $1=1/r+1/r'$. If $\lambda \leq N/2$ holds we have $r'\geq 0$ and we can use the tightness of the Hölder inequality to rewrite 
\begin{align}
\sup_{\v\in B_r} \left|\v^\dagger U\mathbf{w} \right|=\Vert U \w\Vert_{r'}\; ,
\end{align}
i.e. the maximization over $B_r$ gives the dual norm of $r$. Substituting $\w$ by $\phi=\w\Vert\phi\Vert_s$ then gives 
\begin{align}
\omega_N=\sup_{\phi\in\mathbb{C}^d} \frac{\Vert U \phi\Vert_{r'}}{\Vert  \phi \Vert_s}
\end{align}
Here the analogous rewriting applies with $s'$ given by $1=1/s+1/{s'}$, if $\mu\leq \lambda/2$ holds.

For showing (iii), i.e.
\begin{align}
c=\lim_{N\rightarrow\infty} -N\log(\omega_N)\quad,
\end{align}
it suffices to expand all exponentials in \eqref{eq:proof5} and \eqref{eq:proof4} up to the first order in $N$. On this order the Golden-Thomson inequality is a equality. 
\end{proof}

\begin{remark}[The Maassen and Uffink bound]\label{muffremark}\normalfont
For the case of $N=2$ and $\lambda=\mu=1$, in Thm.\ref{thm1}, we get $s=r=1$ and $s'=r'=\infty$. Hence, we recover the Maassen-Uffink bound \cite{muff}. Explicitly, we have 
\begin{align}\label{muff}
\omega_2(1,1)= \sup_{\substack{\x \in B_1(\mathbb{C}^d)\\ \y \in B_1(\mathbb{C}^d)}} \left| \x^\dagger U \y \right|=\max_{ij}\left| U_{ij}\right|\;.
\end{align} Here we used that $\left| \x^\dagger U \y \right|$ is convex in $\x$ and $\y$. Hence, $\sup_{\x,\y}$ is attained at the extreme points of $ B_1(\mathbb{C}^d)$. Up to a phase, those extreme points have the form $(0,\cdots,0,1,0\cdots,0)$ , i.e. they have their support only on a single site. So, choosing $\x$ and $\y$, with support on the $i-th$ and $j-th$ site, will give $\left| \x^\dagger U \y \right|=\left|U_{ij}\right|$. 
\end{remark}

\begin{remark}[Renyi-Entropies]\normalfont
Alternatively, the bound obtained in Thm.~\ref{thm1} can be expressed in terms of Renyi-entropies: Using statement (i), (ii) and (iii) together  directly gives 
\begin{align} \label{renyi5}
 c(\lambda,\mu) &\geq -N\log\Vert U \Vert_{r',s} \nonumber\\&= \inf_{\phi \in \mathcal{H}} N \log \Vert \phi \Vert_r -N \log \Vert U \phi \Vert_{s'}.
\end{align}  Here a straightforward computation shows 
\begin{align}
\frac{2-r}{r}=\lambda/N \text{ and } \frac{2-s'}{s'} = -\mu/N  . 
\end{align}
So, when we proceed as in \eqref{renyi1}, substituting the Renyi entropy in \eqref{renyi5} gives
\begin{align} \label{renyi6}
 c(\lambda,\mu) \geq \inf_{\phi \in \mathcal{H}} \lambda H_{r/2} (X|\phi)+\mu H_{s'/2} (Y|\phi).
\end{align}
\end{remark}

\begin{lem}[Generalization of the Maassen and Uffink bound]
\label{lem1}
Let $\mathbf{u}_i$ denote the $i$-th column of the basis transformation $U$ that links the measurements $X$ and $Y$. Then, for $1\geq\lambda\geq\mu\geq 0$  and all states $\rho$ we have 
\begin{align}\label{generalmaassen}
\lambda H (X|\rho) + \mu H(Y|\rho) \geq -2\lambda\log \left( \sup_{i=1\cdots d} \bigl\Vert \mathbf{u}_i \bigr\Vert_t  \right) .
\end{align} 
with
\begin{align}
t=\frac{2}{(1-\mu/\lambda)}
\end{align}
Note that for the case $1\geq\mu\geq\lambda\geq 0$ the same holds, if $U$ is replaced by $U^\dagger$, i.e. by the transformation between $Y$ and $X$.
\end{lem}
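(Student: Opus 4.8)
The plan is to derive Lemma~\ref{lem1} as a corollary of Theorem~\ref{thm1} by specializing the free parameter $N$ and then evaluating the resulting mixed norm $\Vert U\Vert_{r',s}$ in closed form. First I would set $N=2\lambda$ in statement (i) of Thm.~\ref{thm1}. With this choice, $r=\tfrac{2N}{N+2\lambda}=\tfrac{4\lambda}{4\lambda}=1$, so the supremum over $\x$ in \eqref{rsstuff} runs over the $1$-norm ball, whose extreme points (up to a phase) are the standard basis vectors $e_i$ — exactly as in Rem.~\ref{muffremark}. Since $\left|\x^\dagger U\y\right|$ is convex in $\x$, the supremum over $\x\in B_1(\mathbb{C}^d)$ is attained at such an $e_i$, and $e_i^\dagger U\y=\langle e_i,U\y\rangle$ picks out one component of $U\y$, namely $(U\y)_i=\mathbf{u}_i^\dagger$-paired... more precisely it is the inner product of the $i$-th row-type slice with $\y$; I need to check that the relevant object is the $i$-th column $\mathbf{u}_i$ of $U$ when the roles are bookkept correctly, which is just the statement $e_i^\dagger U = (\overline{\mathbf{u}_i})^T$ or its conjugate. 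Then $\omega_N = \sup_i \sup_{\y\in B_s} |\mathbf{u}_i^\dagger \y|$.

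Next I would evaluate the inner supremum. With $N=2\lambda$, the exponent for $\y$ is $s=\tfrac{2N}{N+2\mu}=\tfrac{4\lambda}{2\lambda+2\mu}=\tfrac{2\lambda}{\lambda+\mu}=\tfrac{2}{1+\mu/\lambda}$, so $\sup_{\y\in B_s}|\mathbf{u}_i^\dagger\y|$ is, by tightness of Hölder's inequality, the dual norm $\Vert\mathbf{u}_i\Vert_{s'}$ where $1/s+1/s'=1$. A direct computation gives $s' = \tfrac{s}{s-1}$; substituting $s=\tfrac{2}{1+\mu/\lambda}$ yields $s' = \tfrac{2}{1-\mu/\lambda}=t$, matching the claimed exponent. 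Here I must check that $t\geq 1$ (equivalently $s'\geq 1$, equivalently $s\le 2$, i.e. $\mu/\lambda\ge 0$), which holds under the hypothesis $\lambda\ge\mu\ge 0$; and that $s\ge 1$ so that $B_s$ is genuinely a norm ball — this needs $\mu/\lambda\le 1$, again guaranteed. So $\omega_{2\lambda}(\lambda,\mu)=\sup_{i}\Vert\mathbf{u}_i\Vert_t$.

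Finally I would plug this into the bound from statement (i), $c(\lambda,\mu)\ge -N\log\omega_N = -2\lambda\log\bigl(\sup_i\Vert\mathbf{u}_i\Vert_t\bigr)$, which is exactly \eqref{generalmaassen}; and since $c(\lambda,\mu)=\inf_\rho[\lambda H(X|\rho)+\mu H(Y|\rho)]$ this says the inequality holds on all states $\rho$. The symmetric remark for $1\ge\mu\ge\lambda\ge0$ follows by swapping the roles of $X$ and $Y$: the entropy sum is symmetric up to relabelling, and the basis transformation from $Y$ to $X$ is $U^\dagger$, so one applies the same argument with $(\lambda,\mu)$, $U$ replaced by $(\mu,\lambda)$, $U^\dagger$.

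The computations here are all elementary; the only genuinely delicate points are (a) making sure the parameter choice $N=2\lambda$ is legitimate — Thm.~\ref{thm1}(i) requires only $N\ge 1$, so I should note $\lambda\le 1$ from the hypothesis forces nothing problematic, but I do want $N\ge 1$, i.e. $\lambda\ge 1/2$; if $\lambda<1/2$ one still has the bound from (i) but may want to invoke $1$-homogeneity of $c$ in $(\lambda,\mu)$ to rescale — so the cleanest route is to first reduce to $\lambda=1$ using that $c(\lambda,\mu)=\lambda\, c(1,\mu/\lambda)$ and $H$-terms scale linearly, then set $N=2$; and (b) tracking conjugates/transposes so that it really is the $i$-th \emph{column} $\mathbf{u}_i$ of $U$ (not of $U^\dagger$) that appears, which is a bookkeeping matter fixed by the convention $Y_i=UX_iU^\dagger$ stated earlier. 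I expect (a), the interplay between the admissible range of $N$ and the homogeneity rescaling, to be the main thing to get right.
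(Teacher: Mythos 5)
Your proposal is correct and, in the form you yourself identify as the ``cleanest route'' --- first using $1$-homogeneity to reduce to $c(\lambda,\mu)=\lambda\,c(1,\mu/\lambda)$, then applying Thm.~\ref{thm1} with $N=2$ so that $r=1$, taking extreme points of $B_1(\mathbb{C}^d)$ as in Rem.~\ref{muffremark}, and evaluating the remaining supremum over $B_s$ as the dual norm with exponent $t=2/(1-\mu/\lambda)$ --- it coincides exactly with the paper's proof. The direct choice $N=2\lambda$ you first consider is an equivalent variant up to the rescaling you already note, and the row-versus-column bookkeeping you flag is a genuine but harmless convention issue present in the paper as well.
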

\begin{proof}
The linear uncertainty bound $c(\lambda,\mu)$ is homogeneous in $(\lambda,\mu)$. Hence, we can consider 
\begin{align}
c(\lambda,\mu)=\lambda c(1,\mu/\lambda)
\end{align} We will apply Thm.~\ref{thm1}, with $N=2$, in order to get a lower bound. Here, we have  $s=\frac{2}{1+\mu/\lambda}$ and 
\begin{align}\label{generalmaassen1}
\omega_2(1, \mu/\lambda)=\sup_{\substack{\x \in B_1(\mathbb{C}^d)\\ \y \in B_s(\mathbb{C}^d)}} \left| \x^\dagger U \y \right|
						=\sup_{\substack{i = 1,\cdots, d\\ \y \in B_s(\mathbb{C}^d)}} \left| \mathbf{u}_i\;\y \right| .
\end{align} Here the second equality stems from the same argumentation as in Rem.~\ref{muffremark}. The $\sup$ over $B_s(\mathbb{C}^d)$ on the most right of \eqref{generalmaassen1}, gives the norm dual to $s$, given by $t=\frac{2}{1-\mu/\lambda}$. All in all we have, 
\begin{align}
c(1,\mu/\lambda)&\geq -2\log\left(\omega(1,\mu/\lambda)\right)\nonumber\\&= -2 \log \left( \sup_{i=1\cdots d} \bigl\Vert \mathbf{u}_i \bigr\Vert_t  \right) 
\end{align}
\end{proof}

\begin{remark}[More than two observables]\label{threeobs}\normalfont 
As mentioned in Sec.~\ref{mainres}, the proposition Prop.~\ref{prop1} does not generalize to three measurements. A reasoning, or at least a hint, for this can be found by carefully following the proof of Thm.~\ref{thm1}. In principle, the ansatz in \eqref{minev} can be generalized to more than two measurements as well, and all following steps work out in a similar way, up to \eqref{eq:proof4}. Here the Golden-Thompson inequality was used. It is well known, that the direct generalization of this inequality to three operators fails to hold. Hence, the technique of our proof cannot be generalized for this case. We note that there is an ongoing work of exploring more sophisticated generalizations of this inequality \cite{goldenimpro1,goldenimpro2,goldenimpro3,goldenimpro4}. However, we leave relating this to entropic uncertainty for future work.  
\end{remark}


\section{Additivity of bounds from multiplicativity of $(p,q)$-norms}
\label{additivity}

In this section we will provide the proof of Prop.~\ref{prop1}, i.e. the additivity of linear uncertainty relations.
By using Thm.\ref{thm1} from the section before we can formulate the linear uncertainty in terms of the logarithm of a $(p,q)$-norm. At this point, it is straightforward to check that the additivity of the linear uncertainty is equivalent to the multiplicativity of the  $(p,q)$-norm. In fact, the following theorem Thm.\ref{thm2} provides that, for $p$ and $q$ coming from the correct range: The $(p,q)$ norm of a transformation which admits a product form $U_{AB}=U_A\otimes U_B$ is multiplicative. 

\begin{thm}[Global bounds from local bounds]
\label{thm2}

Let $X_{AB}$ and $Y_{AB}$ be tensor-product bases of a Hilbert space $\mathcal{H}_{AB}=\mathcal{H}_A\otimes\mathcal{H}_B$, i.e. we have 
$X_{AB}=\{X_A^i\otimes X_B^i\}_{i=1,\cdots,d}$ and $Y_{AB}=\{Y_A^i\otimes Y_B^i\}_{i=1,\cdots,d}$, such as $U_{AB}=U_A\otimes U_{B}$. Furthermore let $\eta_A$ and $\eta_B$ denote the optimal constants for 
\begin{align}\label{locbounds}
\Vert U_A \phi\Vert_q\leq \eta_A\Vert \phi\Vert_p \quad \forall \phi\in\mathcal{H}_A\nonumber\\
\Vert U_B \phi\Vert_q\leq \eta_B\Vert \phi\Vert_p \quad \forall \phi\in\mathcal{H}_B \;.
\end{align}
If $1\leq p\leq q$ then
\begin{align}\label{globbound}
\Vert U_{AB}\; \phi\Vert_q\leq \eta_A \eta_B \Vert \phi\Vert_p \quad \forall \phi\in\mathcal{H}_{AB}
\end{align}
holds with $\eta_A \eta_B=\eta_{AB}$ as optimal constant.
\end{thm}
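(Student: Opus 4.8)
The plan is to prove the two inequalities $\eta_{AB}\ge\eta_A\eta_B$ and $\eta_{AB}\le\eta_A\eta_B$ separately; only the second uses the hypothesis $1\le p\le q$.

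For the lower bound I would test \eqref{globbound} on product vectors $\phi=\phi_A\otimes\phi_B$. Since $\|a\otimes b\|_r^r=\|a\|_r^r\,\|b\|_r^r$ for every $r$, both the numerator and the denominator of the quotient defining $\Vert U_{AB}\Vert_{q,p}$ factorize, and since $U_{AB}(\phi_A\otimes\phi_B)=U_A\phi_A\otimes U_B\phi_B$, optimizing over $\phi_A$ and $\phi_B$ independently gives $\Vert U_{AB}\Vert_{q,p}\ge\eta_A\eta_B$. This direction works for any $p,q$.

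For the upper bound --- the real content --- I would peel off the two tensor factors one at a time. Write the components of $\phi$ as $\phi_{ij}$, the first index labelling $\mathcal H_A$ and the second $\mathcal H_B$, and set $\psi_{kj}:=\sum_i (U_A)_{ki}\,\phi_{ij}$, so that $U_{AB}\phi$ has components $\sum_j (U_B)_{lj}\,\psi_{kj}$. Summing the $q$-th powers over the $B$-index $l$ first and applying the local bound \eqref{locbounds} for $U_B$ to each slice $\psi_{k,\cdot}$ gives
\[
\|U_{AB}\phi\|_q\le \eta_B\Big(\sum_k \|\psi_{k,\cdot}\|_p^{\,q}\Big)^{1/q}.
\]
It then remains to show $\big(\sum_k\|\psi_{k,\cdot}\|_p^{\,q}\big)^{1/q}\le \eta_A\|\phi\|_p$. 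Here is where $p\le q$ enters: since $q/p\ge 1$, Minkowski's (integral) inequality applied with exponent $q/p$ to the nonnegative array $|\psi_{kj}|^p$ lets me interchange the outer $\ell^q$-sum over $k$ with the inner $\ell^p$-sum over $j$, bounding the left-hand side by $\big(\sum_j\|\psi_{\cdot,j}\|_q^{\,p}\big)^{1/p}$. Now each $\psi_{\cdot,j}=U_A\phi_{\cdot,j}$, so the local bound for $U_A$ together with $\sum_j\|\phi_{\cdot,j}\|_p^{\,p}=\|\phi\|_p^{\,p}$ completes the estimate. Combining the two steps yields \eqref{globbound}, and together with the lower bound this shows that $\eta_{AB}=\eta_A\eta_B$ is the optimal constant.

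The one genuine obstacle is the norm interchange in the last step. Without the hypothesis $p\le q$, iterating the two local bounds naively controls $\|U_{AB}\phi\|_q$ only by a mixed norm of $\phi$ of the form $\ell^q$-over-one-factor of $\ell^p$-over-the-other, which is in general strictly larger than $\|\phi\|_p$; indeed multiplicativity of $(p,q)$-norms can fail when $p>q$. Minkowski's inequality is exactly the device that collapses this mixed norm back to $\|\phi\|_p$ when $q\ge p$, and everything else is bookkeeping with the definitions of the operator norms and of the $\ell^r$-norm on a tensor product.
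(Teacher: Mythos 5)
Your proof is correct and is essentially the paper's own argument: the quantity $\bigl(\sum_k\Vert\psi_{k,\cdot}\Vert_p^{\,q}\bigr)^{1/q}$ you introduce is exactly the mixed $q\otimes p$-norm the paper defines, your slice-wise application of the local bounds is its property $(ii)$, and your Minkowski interchange with exponent $q/p$ is its property $(iii)$; the product-vector test for optimality also matches. No substantive difference, only that you leave the intermediate mixed norm unnamed.
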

\begin{proof}
We note that a related result, for pointwise positive maps between Lebesque spaces, was discovered by Grey and Sinnamon \cite{simmons}.

The basic object of this proof will be the $p\otimes q$-norm which will be defined immediately. The basic work of this proof is devoted to show some properties of this norm from which the statement directly follows. 

Let $\ket{\phi}\in\mathcal{H}$ with components $\phi=\{\phi_{ij}\}$ sorted within the product base $X_{AB}$ by $\phi_{ij}=\braket{\phi|e^A_i\otimes e^B_j}$ and consider the norm
\begin{align}
\left\Vert \phi \right\Vert_{q\otimes p}:= \left( \sum_i \left( \sum_j \left| \phi_{ij} \right| ^p \right)^{\frac{q}{p}}     \right)^{\frac{1}{q}}\;.
\end{align}
This norm shares the following properties
\begin{align}
&(i)  &\left\Vert \phi \right\Vert_{q\otimes q}                     &=\left\Vert \phi \right\Vert_q\\
&(ii) &\left\Vert(\mathbb{I}\otimes V \phi) \right\Vert_{r\otimes q}&\leq\left\Vert \phi \right\Vert_{r\otimes p}\eta_V\\
&(iii)&\left\Vert \phi \right\Vert_{q\otimes p}                     &\leq\left\Vert \mathbb{F}\phi \right\Vert_{p\otimes q}\\
&\text{with } & \mathbb{F}\phi_1\otimes\phi_2=\phi_2\otimes\phi_1 &\text{ and } p\leq q \nonumber .
\end{align}
We will show the validity of $(i-iii)$ in a moment. First notice that, if $(i-iii)$ are valid we can easily conclude
\begin{align} \label{locfromglob}
\Vert U_{AB} \phi\Vert_q&=            \Vert U_A\otimes U_B \phi\Vert_{q\otimes q}                                \nonumber \\
				   &=            \Vert (\mathbb{I}\otimes U_B)(U_A \otimes \mathbb{I}) \phi\Vert_{q\otimes q} \nonumber \\
				   &\leq \eta_B     \Vert                     U_A \otimes \mathbb{I} \phi \Vert_{q\otimes p} \nonumber \\
				   &\leq \eta_B     \Vert \mathbb{I}\otimes U_A      \mathbb{F}      \phi \Vert_{p\otimes q} \nonumber \\
				   &\leq \eta_B \eta_A \Vert                          \mathbb{F}      \phi \Vert_{p\otimes p} \nonumber \\
				   &=    \eta_B \eta_A \Vert                          			    \phi \Vert_{p         }  \;.
\end{align} 
Furthermore, if we consider states that realize equality in \eqref{locbounds}, i.e. states that belong to optimal $\eta_A$ and $\eta_B$. The tensor-product of two of those states will realize, due to multiplicativity of the $p$-norm, equality in \eqref{globbound} as well. 
Hence, \eqref{locfromglob} will prove the main statement of this Theorem.

Property $(i)$  follows directly by plugging $p=q$ in the definition of the $p\otimes q$ norm, here is nothing more to prove.
The property $(ii)$ follows by expressing $\mathbb{I}\otimes V$ as $\delta_{ik}V_{jl}$ in $X$-Basis and 
\begin{align}
\left\Vert(\mathbb{I}\otimes V \phi) \right\Vert_{r\otimes q}&= \left( \sum_i \left( \sum_j \left|\sum_{lk} \delta_{ik} V_{jl} \phi_{kl} \right| ^q \right)^{\frac{r}{q}}     \right)^{\frac{1}{r}}\\
&= \left( \sum_i \left( \sum_j \left|\sum_{l} V_{jl} \phi_{il} \right| ^q \right)^{\frac{r}{q}}  \right)^{\frac{1}{r}}\\
&= \left( \sum_i \Vert V \phi_{i}\Vert_{q}^{\; r}  \right)^{\frac{1}{r}}\\
&\leq \eta_{V}\left( \sum_i \left( \sum_j \left| \phi_{ij} \right| ^p \right)^{\frac{r}{p}}  \right)^{\frac{1}{r}}\nonumber\\&= \eta_{V}\left\Vert \phi \right\Vert_{r\otimes p} \;.
\end{align}

As a last step, $(iii)$ is a direct consequence of Minkowski's inequality / $l^p$-triangle inequality (see \cite{HardyLittlewoodPolya}), i.e. if $p\geq 1$ : 
\begin{align}
\left(\sum_y \left|\sum_x a_{xy} \right|^p \ \right)^{\frac 1 p} \leq\sum_x\left(\sum_y\left|a_{xy}\right|^p \right)^{\frac{1}{p}} 
\end{align}
So, if $1\leq q/p $ we can use this inequality as follows  
\begin{align}
\left\Vert \phi \right\Vert_{q\otimes p}&= \left( \sum_i \left( \sum_j \left| \phi_{ij} \right|^p \right)^{\frac{q}{p}}\right)^{\frac{1}{q}} \nonumber\\ 
&=\left( \sum_i \left| \sum_j \left| \phi_{ij} \right|^p \right|^{\frac{q}{p}}\right)^{\frac{1}{q/p} \frac{1}{p}} \nonumber\\
&\leq\left( \sum_j \left( \sum_i \left| \phi_{ij} \right|^{p\frac{q}{p}}  \right)^{\frac{p}{q}}\right)^{\frac{1}{p}} 
=\left\Vert \mathbb{F}\phi \right\Vert_{p\otimes q}
\end{align}
and show the validity of $(iii)$. 
\end{proof}
\begin{lem}[Multiplicativity of the $(p,q)$-norm]\label{lem2}\quad\\ For $1\leq p \leq q$, the $(p,q)$-norm of a product unitary $U_{AB}=U_A\otimes U_B$ is multiplicative,  i.e. we have 
\begin{align}
||U_{AB}||_{q,p} = ||U_A||_{q,p} ||U_B||_{q,p}. 
\end{align}
\end{lem}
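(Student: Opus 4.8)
The plan is to derive Lemma \ref{lem2} directly from Theorem \ref{thm2} by specializing and re-reading the statements. Theorem \ref{thm2} is stated for product \emph{bases} $X_{AB}$, $Y_{AB}$ and a product transformation $U_{AB}=U_A\otimes U_B$, with the local optimal constants $\eta_A$, $\eta_B$ for the operator inequalities $\Vert U_A\phi\Vert_q\leq \eta_A\Vert\phi\Vert_p$ and $\Vert U_B\phi\Vert_q\leq \eta_B\Vert\phi\Vert_p$. By the very definition \eqref{pqnormdef} of the $(p,q)$-norm, the optimal constant in $\Vert U_A\phi\Vert_q\leq\eta_A\Vert\phi\Vert_p$ is precisely $\eta_A=\Vert U_A\Vert_{q,p}$, and likewise $\eta_B=\Vert U_B\Vert_{q,p}$ and $\eta_{AB}=\Vert U_{AB}\Vert_{q,p}$. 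So the lemma is just Theorem \ref{thm2} with the $\eta$'s renamed, provided $1\leq p\leq q$, which is exactly the hypothesis assumed there.

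Concretely, the steps I would carry out are: first, observe that since $U_A$ and $U_B$ are unitaries, $U_{AB}=U_A\otimes U_B$ is again unitary, and it is the basis transformation linking the product bases $X_{AB}$ and $Y_{AB}$ in the sense $Y_{AB}^i=U_{AB}X_{AB}^i U_{AB}^\dagger$; hence the hypotheses of Theorem \ref{thm2} are met. Second, unwind the definition of the $(p,q)$-norm to identify $\eta_A=\Vert U_A\Vert_{q,p}$, $\eta_B=\Vert U_B\Vert_{q,p}$, $\eta_{AB}=\Vert U_{AB}\Vert_{q,p}$. Third, invoke the conclusion of Theorem \ref{thm2}, namely that $\eta_{AB}=\eta_A\eta_B$ is the optimal constant in $\Vert U_{AB}\phi\Vert_q\leq\eta_A\eta_B\Vert\phi\Vert_p$; translating back through the identifications gives $\Vert U_{AB}\Vert_{q,p}=\Vert U_A\Vert_{q,p}\Vert U_B\Vert_{q,p}$, which is the claim.

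Since all the real work is already done inside the proof of Theorem \ref{thm2}, there is no genuine obstacle here — the only thing to be slightly careful about is the direction of the inequality and the optimality claim. The ``$\leq$'' direction is the substantive content (chained through the $p\otimes q$-norm via properties $(i)$--$(iii)$ in that proof), while the ``$\geq$'' direction is the easy multiplicativity-of-the-$p$-norm observation: tensoring two local maximizers $\phi_A\otimes\phi_B$ gives a global vector whose ratio $\Vert U_{AB}(\phi_A\otimes\phi_B)\Vert_q/\Vert\phi_A\otimes\phi_B\Vert_p$ factorizes as $(\Vert U_A\phi_A\Vert_q/\Vert\phi_A\Vert_p)(\Vert U_B\phi_B\Vert_q/\Vert\phi_B\Vert_p)=\eta_A\eta_B$, so the supremum defining $\Vert U_{AB}\Vert_{q,p}$ is at least $\eta_A\eta_B$. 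Combining the two directions yields equality, and the lemma follows with essentially a one-line proof referring back to Theorem \ref{thm2}.
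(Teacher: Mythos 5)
Your proposal is correct and matches the paper's own proof: Lemma~\ref{lem2} is obtained there by exactly the same identification $\eta_A=\Vert U_A\Vert_{q,p}$, $\eta_B=\Vert U_B\Vert_{q,p}$, $\eta_{AB}=\Vert U_{AB}\Vert_{q,p}$ and an appeal to Theorem~\ref{thm2}. The lower-bound direction via tensoring local maximizers, which you spell out, is likewise already handled inside the paper's proof of Theorem~\ref{thm2} through the multiplicativity of the $p$-norm on product vectors.
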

\begin{proof}
This is a direct consequence of Thm.~\ref{thm2}. Using the definition of the $(p,q)$-norm we can parse $\eta_A= ||U_A||_{q,p}$, $\eta_B= ||U_B||_{q,p}$ and $\eta_{AB} = ||U_{AB}||_{q,p} $, if we consider $\eta_A$, $\eta_B$ and $\eta_{AB}$ to be optimal bounds. 
\end{proof}

\textbf{Proof of Prop.~\ref{prop1}}\begin{proof}
For proving Prop.~\ref{prop1} it suffices to proof the additivity of the optimal case, i.e. we will consider $c_{A}$, $c_B$ and $c_{AB}$ to already be constants for the best linear uncertainty bound. If the additivity 
\begin{align}
c_{AB}=c_{A}+c_B
\end{align} holds we can directly conclude that the sum of lower bounds on $c_{A}$ and $c_B$ gives a valid lower bound on $c_{AB}$ as well. 

Given measurements $X_{AB}$ and $Y_{AB}$, specified by a product unitary $U_{AB}=U_A\otimes U_B$, we use Thm.~\ref{thm1} to rewrite $c_{A}$, $c_B$ and $c_{AB}$ as the limit of logarithms of $(p,q)$-norms. We assume $\lambda\leq\mu$ both to be finite and $N$ to be sufficiently large such that we can use Thm.~\ref{thm1} part (ii) (here we needed $\lambda,\mu\leq N/2$), and get
\begin{align}
c_A   &= -\lim_{N\rightarrow\infty} \; \log\left( ||U_A   ||_{r,s}\right) \\
c_B   &= -\lim_{N\rightarrow\infty} \; \log\left( ||U_B   ||_{r,s}\right) \\
c_{AB}& =-\lim_{N\rightarrow\infty} \; \log\left( ||U_{AB}||_{r,s}\right)
\end{align}  
Using $r,s$, as given in \eqref{rsstuff}  it is straightforward to check that $\lambda,\mu\leq N/2$ implies $1\leq r\leq s $. Therefore, we can use Lem.~\ref{lem2} and get 
\begin{align}
c_{AB}& =-\lim_{N\rightarrow\infty}\; \log \left( ||U_A   ||_{r,s} ||U_B  ||_{r,s}\right) \nonumber\\
      & =-\lim_{N\rightarrow\infty}\; \log \left( ||U_A   ||_{r,s}\right) + \log\left( ||U_B   ||_{r,s}\right) \nonumber\\
      & = c_A+c_B \;.
\end{align}

\end{proof}

 
\section*{Outlook and conclusion}
\pdfoutput=1

In this work we showed that linear uncertainty relations between product type measurements in multipartions are additive. 
Prop.~\ref{prop1} gives some clear structure to the problem of computing entropic uncertainty bounds. Especially in the context of quantum-coding in cryptography, this result might turn out to be useful, because now it is possible to compute uncertainty bounds in the limit of infinite system sizes for block-coding schemes \cite{ballesterwehner,winterlocking,furrer}.   

The generalization of the Maassen and Uffink bound for arbitrary weights $(\lambda,\mu)$, provided in Lem.~\ref{lem1}, can also be directly employed in a multipartite setting in order to obtain valid state-independent uncertainty relations for this case.
However, this bound is easy computable, it is only a lower bound and presumably only tight in high symmetrical cases (see \cite{ourentro} for a characterization of tightness for the usual Maassen and Uffink bound). The more general problem of providing a 'good' method for computing the optimal bound $c_{AB}$ remains open. We note that there are only few and special cases, including angular momentum and mutual unbiased bases, where this optimal bound is actually known. Thereby, the cases where the optimal bound can be computed analytically are even fewer \cite{sanchez2, sanchez, ourentro} and the known numerical methods only work for very small dimensional problems \cite{alberto}. Here the proof of Thm.~\ref{thm1} might give a new ansatz for better numerics. Explicitly, the minimization in \eqref{minev} and maximization in \eqref{min2} are giving rise to apply the method of alternating minimization.

In Sec.~\ref{pqnorms} we presented an extension to the known connections between the logarithm of $(p,q)$-norms and linear uncertainty relations in terms of the shannon entropy. However, the technique used seems to apply only for the special case we considered. An adaption of this technique to sets of more than two local measurements is not possible without major modifications. As mentioned in Rem.~\ref{threeobs}, this would require to incorporate generalizations of the Golden-Thompson inequality which seems to be a fruitful topic for future work.
The technique from the proof of Thm.~\ref{thm1} might also fail if general POVMs instead of projective measurements are considered. Moreover, it is  not clear if Prop.~\ref{prop1} will hold in this case. A third generalization, that does not hold, arises by considering arbitrary Schur-concave functions. Here, the natural question is to ask if at least any entanglement witness can be constructed. A very recent result \cite{timo} shows that such witnesses, in fact, can be constructed from Tsallis entropies.

\section*{Acknowledgements}
R.S. acknowledges K. Abdelkhalek, O. Gühne, A. Costa, I. Siemon, and R.F. Werner for all the helpful discussions and comments. Furthermore, R.S. acknowledges Coco, Inken and Lars for a careful reading and correcting of this manuscript. R.S. also acknowledges financial support by the RTG 1991 and the SFB DQ-Mat, both founded by the DFG, and the project Q.com-q founded by the BMBF. Finally R.S. acknowledges the hospitality of the Centro de Ciencias de Benasque Pedro Pascual and Petronilla, granted at the beginning of this project. The publication of this article was funded by the Open Access Fund of the Leibniz Universität Hannover.

\bibliographystyle{unsrtnat}
\bibliography{entrolib2}

\end{document}